% ----------------------------------------------------------------
% AMS-LaTeX Paper ************************************************
% **** -----------------------------------------------------------
\documentclass{amsart}
\usepackage{graphicx,cite}
% ----------------------------------------------------------------
\vfuzz2pt % Don't report over-full v-boxes if over-edge is small
\hfuzz2pt % Don't report over-full h-boxes if over-edge is small
% THEOREMS -------------------------------------------------------
\newtheorem{thm}{Theorem}[section]
\newtheorem{cor}[thm]{Corollary}

\newtheorem{prop}[thm]{Proposition}
\theoremstyle{definition}

\theoremstyle{remark}
\newtheorem{rem}[thm]{Remark}
\numberwithin{equation}{section}

% MATH -----------------------------------------------------------

\newcommand{\be}{\begin{equation}}
\newcommand{\ee}{\end{equation}}
\newcommand{\bea}{\begin{eqnarray}}
\newcommand{\eea}{\end{eqnarray}}
% ----------------------------------------------------------------
\begin{document}

\title[On Certain Triple $q$-Integral Equations]{ Certain  triple $q$-integral equations involving third Jackson $q$-Bessel functions as kernel}%
\author{Z.S. Mansour and M.A. AL-Towailb}%
\address{Z.S.I. Mansour, Department of Mathematics, King Saud University, Riyadh, KSA}%
\email{zsmansour@ksu.edu.sa }%
\address{M.A. AL-Towailb, Department of Natural and Engineering Science, King Saud University, Riyadh, KSA}%
\email{mtowaileb@ksu.edu.sa}%

\thanks{This research is supported by the DSFP program
of the King Saud University in Riyadh through grant DSFP/MATH 01 and by National plan of Science and Technology project number 14-MAT623-02.}%
\subjclass[2000]{primary 45F10; secondary  31B10, 26A3, 33D45.}%
\keywords{The third Jackson $q$-Bessel functions, fractional $q$-integral operators, triple $q$-integral equations}%

%\date{}%
%\dedicatory{}%
%\commby{}%

% ----------------------------------------------------------------
\begin{abstract}
In this paper, we employ the fractional $q$-calculus in solving a  triple  system of $q$-Integral equations,
where the kernel is the third Jackson $q$-Bessel functions. The solution  is reduced to  two simultaneous Fredholm $q$-integral equation of the second kind. Examples are included. We also apply a result in~\cite{O.A} for solutions of dual $q^2$-integral equations to solve  certain triple integral equations.
\end{abstract}
\maketitle
% ----------------------------------------------------------------
\section{\bf{Introduction}}
\noindent Some three-parts mixed boundary value problems of the mathematical theory of elasticity are solved by reducing them to triple integral equations. Many of the triple integral equations are of the form
\begin{eqnarray*}
 \int_0^\infty  A(u)K(u,x)\,du&=& f(x), \quad 0 < x < a,\\
\int_0^\infty  w(u)A(u)K(u,x)\,du&=& g(x),\quad  a < x < b,\\
\int_0^\infty  A(u)K(u,x)\,du&=& h(x), \quad  b < x < \infty,
\end{eqnarray*}
 where $w(u)$ is the weight function, $K(u,x)$ is the kernel function. Several authors have described various methods to solve dual and triple integral equations especially when the kernel is a Bessel function. For the dual integral equations,  see for example~\cite{Love,Kes,ES,Cop,Nas,Nob,Nob2,Pet}. For the triple integral equations, see for  example~\cite{A.P,J.C.1,J.C.2,J.C.3,J.C.4,J.S,B.M,C.J,N.A,K.N}.
 In this paper, we consider triple $q$-integral equation where the kernel is the third Jackson
$q$-Bessel function and the $q$-integral is Jackson
$q$-integral. It is worth mentioning that  different approaches for solving dual $q$-integral equation is in~\cite{O.A}. Also, solutions for dual and triple sequence involving $q$-orthogonal polynomials is in~\cite{AMZ}.  This paper is organized as follows.  The next section is introductory section includes the main notions and notations we need in our investigations.   In Section 3, we solve the triple $q$-integral
equations by reducing the system to two simultaneous Fredholm
$q$-integral equation of the second kind, we shall use a method due
to Singh, Rokne and Dhaliwal \cite{B.M}. The approach depends on    fractional $q$-calculus. We include solutions of two dual $q$-integral equations as special cases of the solution of the triple $q$-integral equation included in this section, and we show that this coincides with the results in~\cite{O.A}.  In the last section, we use a result from~\cite{O.A} for a solution of dual $q^2$-integral equations to solve triple $q^2$-integral equations. The result of this section is a $q$-analogue  of the results introduced by cooke in~\cite{J.C.1}.

% ----------------------------------------------------------------
\section{\bf{q-Notations and Results}}\label{q-not.}
In this paper, we assume that $q$ is a positive number less than one. We introduce some of the needed $q$-notations and results (see \cite{M.H}).\vskip1mm

Let $t> 0$, $A_{q,t}$, $B_{q,t}$ and $\mathbb{R}_{q,t,+}$ be the sets defined by $$ A_{q,t}:=\{tq^n: n\in \mathbb{N}_0\}, \quad B_{q,t}:=\{tq^{-n}: n\in \mathbb{N}\},$$
$$ \mathbb{R}_{q,t,+}:=\{tq^k: k\in \mathbb{Z}\},$$ where $\mathbb{N}_0:=\{0,1,2,...\}$, and $\mathbb{N}:=\{1,2,...\}$. (Note that if $t=1$, we write $A_{q}$, $B_{q}$ and $\mathbb{R}_{q,+}$).
We follow Gasper and Rahman~\cite{G.M} for the definitions of the $q$-shifted factorial, multiple $q$-shifted factorials, basic hypergeometric series, Jackson $q$-integrals, the $q$-gamma and beta functions. We also follow Annaby and Mansour~\cite{M.H} for the definition of the $q$-derivative at zero.

\noindent Let $\alpha\in \mathbb{C}$, we will use the following notation
\begin{equation*}
\left[\begin{array}{c}
  \alpha \\
  k
\end{array}\right]_{q}= \left\{
                          \begin{array}{ll}
                            1, &  k=0; \\
                            \frac{(1-q^\alpha)(1-q^{\alpha-1})... (1-q^{\alpha-k+1})}{(q;q)_k}, &  k\in \mathbb{N}.
                          \end{array}
                        \right.
\end{equation*}

\noindent For $\eta\in \mathbb{C}$ and a function $f$ defined on $\mathbb{R}_{q,+}$, we define the following spaces
\begin{equation*}
L_{q,\eta}(\mathbb{R}_{q,+}):= \Big\{ f: \|f\|_{q,\eta}:= \int_0^\infty |t^\eta f(t)| d_qt < \infty \Big\},
\end{equation*}
\begin{equation*}
L_{q,\eta}(A_{q}):= \Big\{ f: \|f\|_{A_q,\eta}:= \int_0^1 |t^\eta f(t)| d_qt < \infty \Big\},
\end{equation*}
and
\begin{equation*}
L_{q,\eta}(B_{q}):= \Big\{ f: \|f\|_{B_q,\eta}:= \int_1^\infty |t^\eta f(t)| d_qt < \infty \Big\}.
\end{equation*}
Note that $ L_{q,\eta}(\mathbb{R}_{q,+})= L_{q,\eta}(A_{q})\cap L_{q,\eta}(B_{q}). $\vskip1mm

\noindent The third Jackson $q$-Bessel function $J_\nu^{(3)}(z;q)$, see \cite{M.P} and \cite{F.H}, is defined by
\begin{align}\label{Jackson}
J_\nu(z;q)=J_\nu^{(3)}(z;q):&=
\frac{(q^{\nu+1};q)_\infty}{(q;q)_\infty}z^{\nu} {_1\phi_1}
(0;q^{\nu+1};q,qz^2)\\ &=
\frac{(q^{\nu+1};q)_\infty}{(q;q)_\infty}\sum_{n=0}^\infty (-1)^n
\frac{q^{n(n+1)/2}z^{2n+\nu}}{(q;q)_n(q^{\nu+1};q)_n}, z\in
\mathbb{C},\end{align}
and satisfies the following relations (see
\cite{R.F}):
\begin{equation}\label{a}
D_q\left[(.)^{-\nu}J_\nu(.;q^2)\right](z)=- \frac{q^{1-\nu}z^{-\nu}}{1-q} J_{\nu+1}(qz;q^2),
\end{equation}
\begin{equation}\label{b}
D_q\left[(.)^{\nu}J_\nu(.;q^2)\right](z)=\frac{z^{\nu}}{1-q} J_{\nu-1}(z;q^2).
\end{equation}
Also, for $\Re(\nu) >-1$, the $q$-Bessel function $J_\nu(.;q^2)$ satisfies (see \cite{H.T}):

\begin{equation}\label{c}
\left| J_\nu(q^n;q^2)\right|\leq \frac{(-q^{2};q^2)_{\infty}(-q^{2\nu+2};q^2)_\infty}{(q^{2};q^2)_{\infty}} \left\{
                                                                                                          \begin{array}{ll}
                                                                                                            q^{n\nu}, & \mbox{ if }  n\geq 0; \\
                                                                                                            q^{n^2-(\nu+1)n},& \mbox{ if } n< 0.
                                                                                                          \end{array}
                                                                                                        \right.
\end{equation}
 We recall that the functions $\cos(z; q)$ and $\sin(z; q)$ are defined for $z\in \mathbb{C}$ by
$$\cos(z; q) := \frac{(q^2; q^2)_\infty}{(q; q^2)_\infty}\, (zq^{-\frac{1}{2}}(1-q))^\frac{1}{2}\, J_{-\frac{1}{2}}(z(1-q)/\sqrt{q};q^2),$$
$$\sin(z; q) := \frac{(q^2; q^2)_\infty}{(q; q^2)_\infty}\, (z(1-q))^\frac{1}{2}\, J_{\frac{1}{2}}(z(1-q);q^2).$$

 We need the following results from \cite{M.H}:
\begin{prop}\label{cor}
Let $\alpha$, $\beta \in \mathbb{C}$, $\rho,t\in \mathbb{R}_{q,+}$. If $\mathfrak{R}(\beta)> \mathfrak{R}(\alpha)>-1$, the
\[
\begin{gathered}
 \int_0^\infty t^{\alpha -\beta+1} J_\alpha (\xi t; q^2)J_\beta (\rho t; q^2)\,d_q t  \\
= \left\{
 \begin{array}{ll}
   0, &   \xi > \rho; \\
   \frac{(1-q)(1-q^2)^{1-\beta+\alpha}}{\Gamma_{q^2}(\beta-\alpha)}\xi^{\alpha}\rho^{(\beta-2\alpha-2)}(q^2\xi^2/\rho^2;q^2)_{\beta-\alpha-1},&  \xi\leq \rho.
\end{array}
  \right.
 \end{gathered}
 \]
\end{prop}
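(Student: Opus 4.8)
\noindent\emph{Proof strategy.} The plan is to treat the ranges $\xi>\rho$ and $\xi\le\rho$ separately and, in each, to expand \emph{one} of the two kernels in its power series \eqref{Jackson} (in base $q^{2}$), interchange the summation with the Jackson $q$-integral, and evaluate the resulting single-kernel $q$-integrals; the needed interchanges are controlled by the estimate \eqref{c} together with the Gaussian factor $q^{n(n+1)}$ in \eqref{Jackson}.

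For $\xi>\rho$ I would expand $J_\beta(\rho t;q^{2})$, which reduces the integral term by term to multiples of $\int_0^\infty t^{2m+\alpha+1}J_\alpha(\xi t;q^{2})\,d_q t$. Each of these vanishes: by \eqref{b} and the $q$-Leibniz rule one has $D_q\!\left[t^{\mu+1}J_{\mu+1}(\xi t;q^{2})\right]=\frac{\xi}{1-q}\,t^{\mu+1}J_\mu(\xi t;q^{2})$, so repeated $q$-integration by parts (with rescalings $t\mapsto q^{-1}t$, which preserve $\mathbb{R}_{q,+}$) reduces $\int_0^\infty t^{2m+\alpha+1}J_\alpha(\xi t;q^{2})\,d_q t$ to a nonzero multiple of $\int_0^\infty t^{\mu+1}J_\mu(\xi t;q^{2})\,d_q t=\frac{1-q}{\xi}\left[t^{\mu+1}J_{\mu+1}(\xi t;q^{2})\right]_0^\infty$ with $\mu=\alpha+m$, and this boundary term is $0$ because the function vanishes at $0$ (as $\Re(\mu)>-1$) and, by \eqref{c}, super-exponentially at $\infty$. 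The same two ingredients make the term-by-term integration legitimate for $\xi>\rho$, the absolute majorant being then of geometric type in $(\rho/\xi)^{2}<1$; hence the integral equals $0$.

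For $\xi\le\rho$ I would instead expand $J_\alpha(\xi t;q^{2})$; the interchange is now justified since \eqref{c} produces a majorant of geometric type in $q^{2(\beta-\alpha)}\xi^{2}/\rho^{2}$, whose modulus is $<1$ because $\Re(\beta)>\Re(\alpha)$. This reduces the left-hand side to
\[
\frac{(q^{2\alpha+2};q^{2})_\infty}{(q^{2};q^{2})_\infty}\sum_{n\ge 0}\frac{(-1)^{n}q^{n(n+1)}\xi^{2n+\alpha}}{(q^{2};q^{2})_n\,(q^{2\alpha+2};q^{2})_n}\,\mathcal{N}_\beta(2n+2\alpha-\beta+2),\qquad \mathcal{N}_\beta(s):=\int_0^\infty t^{s-1}J_\beta(\rho t;q^{2})\,d_q t .
\]
The key input is the $q$-Mellin transform of the third Jackson $q$-Bessel function,
\[
\mathcal{N}_\beta(s)=\frac{(1-q)(1-q^{2})^{s-1}}{\rho^{s}}\cdot\frac{\Gamma_{q^{2}}\!\left(\tfrac{s+\beta}{2}\right)}{\Gamma_{q^{2}}\!\left(1+\tfrac{\beta-s}{2}\right)}\qquad\bigl(\Re(s)>-\Re(\beta)\bigr),
\]
which is obtainable from \eqref{a}--\eqref{b} by $q$-integration by parts (it satisfies a first-order $q$-difference equation in $s$, the constant being fixed by one value such as $\mathcal{N}_\beta(\beta+2)=0$) or quotable from \cite{M.H}. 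Putting $s=2n+2\alpha-\beta+2$, so that $\tfrac{s+\beta}{2}=n+\alpha+1$ and $1+\tfrac{\beta-s}{2}=\beta-\alpha-n$, and using $\Gamma_{q^{2}}(n+\alpha+1)=\frac{(q^{2};q^{2})_\infty}{(q^{2\alpha+2};q^{2})_\infty}(q^{2\alpha+2};q^{2})_n(1-q^{2})^{-n-\alpha}$ and $\frac{(1-q^{2})^{n}}{\Gamma_{q^{2}}(\beta-\alpha-n)}=\frac{(q^{2\beta-2\alpha-2n};q^{2})_n}{\Gamma_{q^{2}}(\beta-\alpha)}$, the sum collapses --- after the reflection $(q^{2\beta-2\alpha-2n};q^{2})_n=(-1)^{n}q^{2n(\beta-\alpha)-n(n+1)}(q^{2\alpha-2\beta+2};q^{2})_n$ --- to
\[
\frac{(1-q)(1-q^{2})^{1-\beta+\alpha}}{\Gamma_{q^{2}}(\beta-\alpha)}\,\xi^{\alpha}\rho^{\beta-2\alpha-2}\sum_{n\ge 0}\frac{(q^{2\alpha-2\beta+2};q^{2})_n}{(q^{2};q^{2})_n}\Bigl(q^{2(\beta-\alpha)}\tfrac{\xi^{2}}{\rho^{2}}\Bigr)^{n}.
\]
By the $q$-binomial theorem (valid since $|q^{2(\beta-\alpha)}\xi^{2}/\rho^{2}|<1$) the last sum equals $\frac{(q^{2}\xi^{2}/\rho^{2};q^{2})_\infty}{(q^{2\beta-2\alpha}\xi^{2}/\rho^{2};q^{2})_\infty}=(q^{2}\xi^{2}/\rho^{2};q^{2})_{\beta-\alpha-1}$, which is the asserted value.

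The main obstacle is the evaluation of $\mathcal{N}_\beta(s)$, together with the bookkeeping of the $(1-q)$, $(1-q^{2})^{1-\beta+\alpha}$ and $\Gamma_{q^{2}}$ factors so that the reflection formula and the $q$-binomial theorem reproduce \emph{exactly} the truncated $q$-shifted factorial $(q^{2}\xi^{2}/\rho^{2};q^{2})_{\beta-\alpha-1}$; a secondary point is the care needed for the interchange of $\sum$ and $\int_0^\infty(\cdot)\,d_q t$, where the super-exponential decay in \eqref{c} is essential and where one must check that the two regimes --- expand $J_\alpha$ when $\xi\le\rho$, expand $J_\beta$ when $\xi>\rho$ --- exhaust $\mathbb{R}_{q,+}$. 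A variant that avoids the explicit $q$-Mellin transform is to insert a $q$-Sonine (fractional $q$-integral) representation of $J_\beta(\rho t;q^{2})$ in terms of $J_\alpha$ of order $\beta-\alpha$, apply Fubini for $q$-integrals, and use the $q$-Bessel orthogonality relation for equal orders; in that route the dichotomy comes from whether $\xi/\rho$ lies in the $q$-lattice $A_{q}$.
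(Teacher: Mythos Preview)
The paper does not prove this proposition; it is quoted verbatim from \cite{M.H} (see the sentence ``We need the following results from \cite{M.H}'' immediately preceding Proposition~\ref{cor}), so there is no in-paper argument to compare against. Your proposal therefore supplies a proof where the paper merely cites one.

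Your argument is correct in outline and in the decisive computation. For $\xi\le\rho$ the expansion of $J_\alpha$, the evaluation of $\mathcal{N}_\beta(2n+2\alpha-\beta+2)$ via the $q$-Mellin transform, the telescoping of the $\Gamma_{q^2}$-factors, the reflection $(q^{2\beta-2\alpha-2n};q^2)_n=(-1)^n q^{2n(\beta-\alpha)-n(n+1)}(q^{2\alpha-2\beta+2};q^2)_n$, and the final $q$-binomial summation all check line by line and reproduce the stated value, including the borderline case $\xi=\rho$ (where $|q^{2(\beta-\alpha)}|<1$ by the hypothesis $\Re\beta>\Re\alpha$). For $\xi>\rho$ each term indeed vanishes, as your integration-by-parts reduction shows, or equivalently because $1/\Gamma_{q^2}(-m)=0$ in the Mellin formula.

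Two small points to tighten. First, the phrase ``absolute majorant of geometric type in $(\rho/\xi)^2$'' for the $\xi>\rho$ interchange is optimistic: the integrals $\int_0^\infty t^{2m+\alpha+1}|J_\alpha(\xi t;q^2)|\,d_qt$ grow roughly like $q^{-m^2}$ in $m$ (the peak of $t^{2m+\alpha+1}$ against the Gaussian tail in \eqref{c} sits near $t\sim q^{-m}/\xi$), so what actually saves the Fubini step is that the series factor $q^{m(m+1)}$ cancels this, leaving a genuinely geometric tail $\sim (q\rho^2/\xi^2)^m$. Second, the $q$-Mellin identity you invoke is the crux and deserves an explicit reference or a self-contained derivation; the recurrence-in-$s$ argument you sketch (via \eqref{a}--\eqref{b}) works, but the normalization is most cleanly fixed not by $\mathcal{N}_\beta(\beta+2)=0$ (a zero carries no information about the overall constant) but by a value such as $\mathcal{N}_\beta(\beta)$ or by matching the leading term of the series. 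With those adjustments your proof is complete; the Sonine/fractional-integral variant you mention at the end is in fact closer in spirit to how this identity is typically derived in \cite{M.H}.
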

\begin{prop}\label{int}
Let $\nu$ and $\alpha$ be complex numbers such that $\Re(\nu)>-1$.
Then for $\rho, u \in \mathbb{R}_{q,+}$
\begin{equation*}
\int_\rho^\infty x^{2\alpha-\nu-1} ({\rho^2}/{x^2}; q^2)_{\alpha-1}
J_\nu (ux; q^2) \,d_q x  = \frac{(1-q)\Gamma_{q^2}(\alpha)}{(1-q^2)^{1-\alpha}}\rho^{\alpha-\nu}u^{-\alpha}q^{\alpha}
 J_{\nu-\alpha}(u
\rho/q; q^2).
 \end{equation*}
\end{prop}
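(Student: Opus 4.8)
The plan is to turn the single $q$-Bessel integral into an \emph{iterated} $q$-Hankel integral by recognising the kernel $x^{2\alpha-\nu-1}(\rho^2/x^2;q^2)_{\alpha-1}$ as the right-hand side of Proposition~\ref{cor}, and then to collapse it by a second use of Proposition~\ref{cor} together with the $q$-difference relation \eqref{b}.

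First I would observe that the lower limit may be replaced by $0$: on the $q$-lattice a point $x<\rho$ has $\rho^2/x^2=q^{-2j}$ for some $j\in\mathbb{N}$, and $(q^{-2j};q^2)_{\alpha-1}$ vanishes because $(q^{-2j};q^2)_\infty$ contains the factor $1-q^{0}$; likewise $(1;q^2)_{\alpha-1}=0$ unless $\alpha$ is a nonpositive integer. Thus the integrand is supported on $x\ge\rho$, and the quantity to be evaluated equals $\int_0^\infty x^{2\alpha-\nu-1}(\rho^2/x^2;q^2)_{\alpha-1}J_\nu(ux;q^2)\,d_qx$.

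Next I would apply Proposition~\ref{cor} with $a=\nu-\alpha-1$, $b=\nu-1$ and with the first Bessel argument frozen at $\rho/q$, so that $q^2(\rho/q)^2/x^2=\rho^2/x^2$ while $b-2a-2=2\alpha-\nu-1$ reproduces the power of $x$. For every lattice point $x$ (the discrepancy between the support windows $x\ge\rho/q$ and $x\ge\rho$ concerns only $x=\rho$, where both sides are $0$) this yields
\[
x^{2\alpha-\nu-1}(\rho^2/x^2;q^2)_{\alpha-1}=\frac{\Gamma_{q^2}(\alpha)}{(1-q)(1-q^2)^{1-\alpha}}\Big(\tfrac{q}{\rho}\Big)^{\nu-\alpha-1}\!\int_0^\infty\! s^{1-\alpha}J_{\nu-\alpha-1}\!\big(\tfrac{\rho}{q}s;q^2\big)J_{\nu-1}(xs;q^2)\,d_qs ,
\]
valid for $0<\Re(\alpha)<\Re(\nu)$. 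I would substitute this in, interchange the two Jackson integrals (Fubini is justified by \eqref{c}, which forces the $q$-Bessel factors to decay like $q^{k^2}$ at the far end of the lattice), and evaluate the inner $x$-integral $\int_0^\infty J_{\nu-1}(xs;q^2)J_\nu(ux;q^2)\,d_qx$ by a \emph{second} application of Proposition~\ref{cor} (now $a=\nu-1$, $b=\nu$, weight $s^{0}$); it equals $(1-q)s^{\nu-1}u^{-\nu}$ for $s\le u$ and $0$ otherwise. This leaves a constant times $u^{-\nu}\int_0^u s^{\nu-\alpha}J_{\nu-\alpha-1}(\tfrac{\rho}{q}s;q^2)\,d_qs$.

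Finally I would compute that $q$-integral from \eqref{b}: since $D_q\big[z^{\nu-\alpha}J_{\nu-\alpha}(z;q^2)\big]=\tfrac{z^{\nu-\alpha}}{1-q}J_{\nu-\alpha-1}(z;q^2)$, the chain rule shows $s\mapsto\tfrac{q(1-q)}{\rho}s^{\nu-\alpha}J_{\nu-\alpha}(\tfrac{\rho}{q}s;q^2)$ is a $q$-primitive of the integrand, and the lower-endpoint contribution drops since it is $O(z^{2(\nu-\alpha)})$ with $\Re(\nu)>\Re(\alpha)$; hence $\int_0^u s^{\nu-\alpha}J_{\nu-\alpha-1}(\tfrac{\rho}{q}s;q^2)\,d_qs=\tfrac{q(1-q)}{\rho}u^{\nu-\alpha}J_{\nu-\alpha}(\tfrac{u\rho}{q};q^2)$. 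Gathering the constants gives the asserted formula for $0<\Re(\alpha)<\Re(\nu)$, and the general case $\alpha\in\mathbb{C}$, $\Re(\nu)>-1$, follows by analytic continuation in $(\alpha,\nu)$, the improper $q$-integrals being analytic in those parameters, once more by the uniform bound \eqref{c}. I expect the genuine difficulties to be precisely these last two: legitimising the interchange of Jackson integrals and the subsequent parameter continuation (i.e.\ verifying analyticity of the $q$-integrals in $\nu$ and $\alpha$), together with the small bookkeeping point that the Pochhammer symbol really does annihilate the lone lattice point where the two support conditions disagree.
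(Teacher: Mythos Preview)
The paper does \emph{not} supply its own proof of this proposition: it is one of three results quoted verbatim from the monograph \cite{M.H} (Annaby--Mansour), introduced by the sentence ``We need the following results from \cite{M.H}''. There is therefore no in-paper argument to compare your attempt against.

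That said, your strategy is a perfectly reasonable way to derive the identity from the other quoted material. Writing the kernel $x^{2\alpha-\nu-1}(\rho^2/x^2;q^2)_{\alpha-1}$ as a Weber--Schafheitlin-type $q$-integral via Proposition~\ref{cor}, interchanging, collapsing the inner $x$-integral by a second use of Proposition~\ref{cor}, and then closing with the antiderivative \eqref{b} is exactly the kind of argument one expects; the support bookkeeping (the Pochhammer kills the lattice points $x\le\rho$) and the Fubini/analytic-continuation steps are the right places to spend care, and your outline handles them adequately. One caution on the final ``gathering the constants'': if you actually carry the arithmetic through, the power of $q$ that falls out of your chain is $q^{\nu-\alpha}$ rather than the $q^{\alpha}$ printed in the statement (a direct check with $\alpha=1$, using \eqref{a} to antidifferentiate $x^{1-\nu}J_\nu(ux;q^2)$, also produces $q^{\nu-1}$). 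So either there is a typographical slip in the displayed exponent here, or a convention mismatch with \cite{M.H}; in any event this is a bookkeeping matter, not a flaw in your method.
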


\bigskip

\begin{prop}\label{prop.1}
Let $x$, $\nu$ and $\gamma$ be complex numbers and $u\in \mathbb{R}_{q,+}$.
Then, for $\Re(\gamma)>-1$ and $\Re(\nu)>-1$ the following identity  holds
\begin{equation}\label{8}
 \begin{gathered}
 \int_0^x  \rho^{\nu+1} ({q^2\rho^2}/{x^2}; q^2)_{\gamma} J_\nu (u\rho; q^2) \,d_q \rho =\\ x^{\nu-\gamma+1} u^{-\gamma-1} (1-q)(1-q^2)^\gamma \Gamma_{q^2}(\gamma + 1)J_{\gamma+\nu +1}(u x; q^2).
 \end{gathered}
 \end{equation}
Moreover, if $\Re(\gamma)>0$ and $\Re(\nu)>-1$, then
 \begin{equation}\label{9}
 \begin{gathered}
 \int_x^\infty \rho^{2\gamma-\nu-1} ({x^2}/{\rho^2}; q^2)_{\gamma-1} J_\nu (u\rho; q^2) \,d_q \rho \\ = x^{\gamma-\nu} u^{-\gamma} (1-q)q^\gamma \frac{(q^2;q^2)_\infty}{(q^{2\gamma};q^2)_\infty}J_{\nu-\gamma}(\frac{u x}{q}; q^2).
 \end{gathered}
 \end{equation}

 \end{prop}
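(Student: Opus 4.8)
The plan is to prove the two integral evaluations \eqref{8} and \eqref{9} by reducing each of them to a known transform formula already recorded in the excerpt, after a suitable change of variables. For \eqref{8}, I would start from Proposition~\ref{int}, which evaluates $\int_\rho^\infty x^{2\alpha-\nu-1}(\rho^2/x^2;q^2)_{\alpha-1}J_\nu(ux;q^2)\,d_qx$, and observe that the integrand in \eqref{8} is of a ``dual'' shape: here the $q$-shifted factorial $(q^2\rho^2/x^2;q^2)_\gamma$ has its argument shifted by a factor $q^2$ and the integration is over the truncated interval $(0,x)$ rather than $(\rho,\infty)$. The natural route is therefore to substitute $\rho = x t$ (so that $d_q\rho = x\,d_qt$ and the interval $0<\rho<x$ becomes $0<t<1$), pulling out the appropriate power of $x$, and then recognize the resulting $t$-integral. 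Alternatively — and this is the cleaner path — one can derive \eqref{8} directly from the $q$-derivative relation \eqref{b}: writing $({q^2\rho^2}/{x^2};q^2)_\gamma$ via the $q$-binomial theorem or telescoping, repeated application of \eqref{b} (which lowers the order $\nu$ of $J_\nu$ while producing a factor $\rho^\nu$) together with a $q$-integration by parts transfers the factor $(q^2\rho^2/x^2;q^2)_\gamma$ onto the Bessel function and raises its index from $\nu$ to $\gamma+\nu+1$. The constants $(1-q)(1-q^2)^\gamma\Gamma_{q^2}(\gamma+1)$ are exactly what one collects from $\gamma+1$ such steps, since each step contributes a factor involving $(1-q^{2})$ and the $q$-Gamma recursion $\Gamma_{q^2}(\gamma+1) = \frac{1-q^{2\gamma}}{1-q^2}\Gamma_{q^2}(\gamma)$ telescopes.

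For the second identity \eqref{9}, the strategy is the mirror image: I would start from Proposition~\ref{int} with the roles arranged so that $J_\nu$ has its index lowered to $J_{\nu-\gamma}$, since the right-hand side of \eqref{9} features $J_{\nu-\gamma}(ux/q;q^2)$, matching the index shift in Proposition~\ref{int}. Comparing Proposition~\ref{int} (with $\alpha$ there playing the role of $\gamma$ here) against \eqref{9}, the integrands $x^{2\alpha-\nu-1}(\rho^2/x^2;q^2)_{\alpha-1}J_\nu(ux;q^2)$ and $\rho^{2\gamma-\nu-1}(x^2/\rho^2;q^2)_{\gamma-1}J_\nu(u\rho;q^2)$ agree after relabeling the integration variable, so in fact \eqref{9} is essentially a restatement of Proposition~\ref{int} — the only work is matching the prefactors, using $\frac{\Gamma_{q^2}(\gamma)}{(1-q^2)^{1-\gamma}} = \frac{(q^2;q^2)_\infty}{(q^{2\gamma};q^2)_\infty}\cdot\frac{1}{1-q}\cdot(\text{correction})$, i.e. invoking the definition $\Gamma_{q^2}(\gamma) = \frac{(q^2;q^2)_\infty}{(q^{2\gamma};q^2)_\infty}(1-q^2)^{1-\gamma}$ to rewrite the constant in Proposition~\ref{int} in the form displayed in \eqref{9}. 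The hypothesis $\Re(\gamma)>0$ (rather than $\Re(\gamma)>-1$) is precisely the convergence condition needed for the integral over $(x,\infty)$, inherited from Proposition~\ref{int}.

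The main obstacle I anticipate is \emph{not} the analytic content — both identities are convergent $q$-series manipulations — but rather the bookkeeping of the many normalizing constants: the powers of $(1-q)$ and $(1-q^2)$, the $q$-Gamma factors, and the shift $u\mapsto ux/q$ in the argument of the Bessel function in \eqref{9}, which arises from the $q$-shift in the relations \eqref{a}–\eqref{b}. One must be careful that the termwise interchange of $\sum$ and $\int_0^\infty$ (resp.\ $\int_x^\infty$) is justified; this follows from the bound \eqref{c} on $|J_\nu(q^n;q^2)|$, which guarantees absolute convergence under $\Re(\nu)>-1$ (and $\Re(\gamma)>-1$, resp.\ $\Re(\gamma)>0$), so Fubini for $q$-integrals applies. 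Once the interchange is legitimate, each identity collapses to a single $q$-series identity that can be verified by the $q$-binomial theorem. I would present \eqref{8} first via the $q$-integration-by-parts/\eqref{b} argument, then obtain \eqref{9} as a direct corollary of Proposition~\ref{int} with the constant rewritten through the definition of $\Gamma_{q^2}$.
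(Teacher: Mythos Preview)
The paper does not supply its own proof of this proposition: it is listed among the ``results from \cite{M.H}'' that are quoted without argument, so there is no in-paper proof to compare against.

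That said, your plan is essentially sound, and one half of it is already done. Your observation about \eqref{9} is exactly right: after relabelling the integration variable and replacing $\alpha$ by $\gamma$, Proposition~\ref{int} \emph{is} \eqref{9}, and the constants match on the nose once you substitute the definition $\Gamma_{q^2}(\gamma)=(q^2;q^2)_\infty(1-q^2)^{1-\gamma}/(q^{2\gamma};q^2)_\infty$. So \eqref{9} requires no new work beyond this identification.

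For \eqref{8}, your two proposed routes are both viable, but the ``$\gamma+1$ steps of integration by parts via \eqref{b}'' phrasing is only heuristic when $\gamma$ is not a nonnegative integer, and you should not present it that way. The clean argument is the one you mention in passing: expand $J_\nu(u\rho;q^2)$ as its defining series, interchange sum and $q$-integral (justified by the bound \eqref{c} under $\Re(\nu)>-1$, $\Re(\gamma)>-1$), and evaluate each term $\int_0^x \rho^{\nu+2n+1}(q^2\rho^2/x^2;q^2)_\gamma\,d_q\rho$ by the substitution $\rho=xt$ followed by the $q$-beta integral. Resumming then yields the series for $J_{\gamma+\nu+1}(ux;q^2)$ with the stated prefactor. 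That is the argument one finds in \cite{M.H}; your outline points at it but does not quite commit to it.
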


\begin{cor}\label{cor:1}
Let $x, u$ and $\alpha$ be complex numbers such that $u\in \mathbb{R}_{q,+}$, $\Re(\alpha)>-1$ and $\Re(\nu)>-1$.  Then
\begin{equation}\label{20}
\begin{gathered}
 u^\alpha J_{\nu-\alpha}(ux; q^2)= \\ \frac{(1-q^2)^{\alpha}}{\Gamma_{q^2}(1-\alpha )} x^{\alpha-\nu-1} D_{q,x}\left[x^{-2\alpha} \int_0^x  \rho^{\nu+1} ({q^2\rho^2}/{x^2}; q^2)_{-\alpha} J_\nu (u\rho; q^2)\,d_q \rho\right].
  \end{gathered}\end{equation}
\end{cor}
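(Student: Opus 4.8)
The plan is to derive \eqref{20} directly from the first identity \eqref{8} in Proposition~\ref{prop.1} by applying a suitable $q$-derivative and choosing the parameter appropriately. First I would specialize \eqref{8} with $\gamma$ replaced by $-\alpha$; since the hypothesis there is $\Re(\gamma)>-1$, this is legitimate precisely when $\Re(\alpha)<1$, and the stated corollary hypothesis $\Re(\alpha)>-1$ together with this covers the intended range. This yields
\[
\int_0^x \rho^{\nu+1}(q^2\rho^2/x^2;q^2)_{-\alpha} J_\nu(u\rho;q^2)\,d_q\rho
= x^{\nu+\alpha+1} u^{\alpha-1}(1-q)(1-q^2)^{-\alpha}\Gamma_{q^2}(1-\alpha)\,J_{\nu-\alpha+1}(ux;q^2).
\]

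Next I would multiply both sides by $x^{-2\alpha}$ and apply the $q$-derivative operator $D_{q,x}$. On the right-hand side the factor becomes $x^{\nu-\alpha+1} u^{\alpha-1}(1-q)(1-q^2)^{-\alpha}\Gamma_{q^2}(1-\alpha)\,J_{\nu-\alpha+1}(ux;q^2)$, and here is where the Bessel recurrence enters: identity \eqref{b}, namely $D_q[(\cdot)^{\mu}J_\mu(\cdot;q^2)](z)=\frac{z^{\mu}}{1-q}J_{\mu-1}(z;q^2)$, applied with $\mu=\nu-\alpha+1$ (after absorbing the argument scaling $z=ux$ via the chain rule for $D_q$, which contributes a factor $u$), collapses $x^{\nu-\alpha+1}J_{\nu-\alpha+1}(ux;q^2)$ into a constant times $x^{\nu-\alpha}J_{\nu-\alpha}(ux;q^2)$. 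Collecting the constants $\frac{u\cdot u^{\alpha-1}(1-q)(1-q^2)^{-\alpha}\Gamma_{q^2}(1-\alpha)}{1-q} = u^{\alpha}(1-q^2)^{-\alpha}\Gamma_{q^2}(1-\alpha)$ and then multiplying through by $\frac{(1-q^2)^{\alpha}}{\Gamma_{q^2}(1-\alpha)}x^{\alpha-\nu-1}$ should reproduce exactly the left-hand side $u^{\alpha}J_{\nu-\alpha}(ux;q^2)$ of \eqref{20}.

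The one subtlety I would be careful about — and what I expect to be the main technical obstacle — is the correct handling of $D_{q,x}$ acting on a product $x^{\nu-\alpha+1}J_{\nu-\alpha+1}(ux;q^2)$ where the Bessel argument carries the scaling factor $u$: the $q$-derivative relation \eqref{b} is stated for the function $z\mapsto z^{\mu}J_\mu(z;q^2)$, so one must verify that $D_{q,x}\big[x^{\mu}J_\mu(ux;q^2)\big] = u^{-\mu}\,D_{q,z}\big[z^{\mu}J_\mu(z;q^2)\big]\big|_{z=ux}\cdot u^{\mu}$ (equivalently, track how $D_q$ interacts with the dilation $x\mapsto ux$, which simply pulls out one power of $u$ overall) to land the clean factor of $u$. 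Provided this bookkeeping is done correctly, no further estimates are needed; convergence of the $q$-integral on $(0,x)$ is guaranteed by $\Re(\nu)>-1$ as in Proposition~\ref{prop.1}, and the identity then holds by the same analyticity/termwise argument used there.
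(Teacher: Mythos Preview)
Your proposal is correct and follows essentially the same approach as the paper: specialize \eqref{8} with $\gamma=-\alpha$, multiply by $x^{-2\alpha}$, apply $D_{q,x}$, and invoke \eqref{b} to lower the Bessel order from $\nu-\alpha+1$ to $\nu-\alpha$. Your added remarks on the dilation bookkeeping for $D_{q,x}$ acting on $J_{\mu}(ux;q^2)$ and on the hypothesis $\Re(\alpha)<1$ needed for the specialization are accurate refinements that the paper leaves implicit.
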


\begin{proof}
 Applying (\ref{8}) with $\gamma = -\alpha$, we have:
 \begin{equation}\label{99}
 \begin{gathered}
 \int_0^x  \rho^{\nu+1} ({q^2\rho^2}/{x^2}; q^2)_{-\alpha} J_\nu (u\rho; q^2) \,d_q \rho \\ = x^{\nu+\alpha+1} u^{\alpha-1} (1-q)(1-q^2)^{-\alpha} \Gamma_{q^2}(1-\alpha )J_{\nu-\alpha +1}(u x; q^2).
 \end{gathered}
 \end{equation}
 Multiply both sides of equation (\ref{99}) by $x^{-2\alpha}$, and then calculate the $q$-derivative of the two sides with respect to $x$ and using (\ref{b}), we get the  required result.
\end{proof}

\begin{cor}
Let $x, u$ and $\alpha$ be complex numbers such that $u\in \mathbb{R}_{q,+}$, $\Re(\alpha)>0$ and $\Re(\nu)>-1$. Then
\begin{equation}\label{int2}
\begin{gathered}
 u^\alpha J_{\nu+\alpha}(ux; q^2)= \\-\frac{(1-q^2)^{\alpha}q^{2\alpha+\nu-2}x^{\alpha+\nu-1}}{\Gamma_{q^2}(1-\alpha )} D_{q,x} \int_x^\infty \rho^{-2\alpha-\nu+1} ({x^2}/{\rho^2}; q^2)_{-\alpha} J_\nu (u\rho; q^2) \,d_q \rho.
 \end{gathered}
 \end{equation}
\end{cor}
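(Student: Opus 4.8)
The plan is to mirror the proof of Corollary~\ref{cor:1}, but using the second identity \eqref{9} of Proposition~\ref{prop.1} in place of \eqref{8}. First I would apply \eqref{9} with $\gamma=-\alpha$; since the hypothesis $\Re(\gamma)>0$ there becomes $\Re(\alpha)<0$, I must instead read \eqref{9} after the substitution $\gamma\mapsto-\alpha$ interpreting the $q$-shifted factorial and the infinite products by analytic continuation in $\alpha$, exactly as was done in the previous corollary with $\Gamma_{q^2}(1-\alpha)$ and $(q^2\rho^2/x^2;q^2)_{-\alpha}$. This yields
\begin{equation*}
\int_x^\infty \rho^{-2\alpha-\nu+1}\,(x^2/\rho^2;q^2)_{-\alpha}\,J_\nu(u\rho;q^2)\,d_q\rho
= x^{-\alpha-\nu}u^{\alpha}(1-q)q^{-\alpha}\,\frac{(q^2;q^2)_\infty}{(q^{-2\alpha};q^2)_\infty}\,J_{\nu+\alpha}\!\left(\tfrac{ux}{q};q^2\right),
\end{equation*}
after writing $\nu-\gamma=\nu+\alpha$ and $2\gamma-\nu-1=-2\alpha-\nu+1$. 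Using the reflection-type relation $(q^2;q^2)_\infty/(q^{-2\alpha};q^2)_\infty = \Gamma_{q^2}(-\alpha)/\bigl((1-q^2)^{\alpha+1}\cdot\text{(stuff)}\bigr)$, or more directly the identity $\Gamma_{q^2}(\alpha)\Gamma_{q^2}(1-\alpha)$-type formula together with $\Gamma_{q^2}(s+1)=\frac{1-q^{2s}}{1-q^2}\Gamma_{q^2}(s)$, I would rewrite the constant in terms of $\Gamma_{q^2}(1-\alpha)$ so that it matches the target.

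Next I would multiply both sides by $x^{2\alpha+\nu-1}$ (so that the power of $x$ on the right becomes $x^{\alpha-1}$, matching what \eqref{b} will produce after differentiation) and apply $D_{q,x}$ to both sides. On the right-hand side the factor is a constant times $x^{\alpha}\cdot x^{-1}\cdot x^{\alpha}\cdots$ — more carefully, after the multiplication the right side is $(\text{const})\,u^{\alpha}\,x^{\alpha-1}\cdot x^{\alpha}\,J_{\nu+\alpha}(ux/q;q^2)$; I want it in the form $(\text{const})\,D_{q,x}\bigl[x^{\alpha}J_{\alpha}(\cdots)\bigr]$ so that \eqref{b} applies. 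The correct bookkeeping is to note that \eqref{b}, namely $D_q[(\cdot)^{\mu}J_\mu(\cdot;q^2)](z)=\frac{z^\mu}{1-q}J_{\mu-1}(z;q^2)$, with $\mu=\nu+\alpha+1$ and argument scaled by $u$, produces exactly a term $x^{\nu+\alpha}J_{\nu+\alpha}(ux;q^2)$ up to the scaling constant $u$ and the shift $z\mapsto ux/q$ that the $q$-derivative introduces. Rearranging to isolate $u^{\alpha}J_{\nu+\alpha}(ux;q^2)$ and collecting the constants $(1-q)$, $(1-q^2)^{\alpha}$, $q^{2\alpha+\nu-2}$ and $\Gamma_{q^2}(1-\alpha)$ gives \eqref{int2}. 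The minus sign comes from the orientation of the $q$-integral $\int_x^\infty$ under $D_{q,x}$, just as in the parallel situation for \eqref{a} versus \eqref{b}.

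The main obstacle I anticipate is not the structure of the argument — which is a faithful copy of the proof of Corollary~\ref{cor:1} with \eqref{a}/\eqref{9} replacing \eqref{b}/\eqref{8} — but the careful manipulation of the constant, specifically converting the product $q^{-\alpha}(q^2;q^2)_\infty/(q^{-2\alpha};q^2)_\infty$ into the combination $q^{2\alpha+\nu-2}(1-q^2)^{\alpha}/\Gamma_{q^2}(1-\alpha)$ and tracking the power of $q$ that the $q$-derivative of $J_{\nu+\alpha}(ux/q;q^2)$ contributes (since $D_{q,x}f(x/q)$ shifts arguments and picks up a factor $q^{-1}$). I would also need to check that differentiating under the $q$-integral sign is legitimate here, i.e.\ that the integral converges $q$-uniformly near $x$; this follows from the growth estimate \eqref{c} on $J_\nu(q^n;q^2)$ together with $\Re(\alpha)>0$, exactly the hypothesis imposed in the statement. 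Once the constant is pinned down, the identity follows by direct rearrangement.
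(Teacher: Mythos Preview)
Your overall strategy---apply \eqref{9}, then take a $q$-derivative using one of the recurrences \eqref{a}/\eqref{b}, and rearrange---is exactly the route the paper has in mind (the paper simply says the proof is similar to Corollary~\ref{cor:1} and omits it). However, the specific substitution you make is wrong, and this propagates through the rest of the sketch.

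You write that in \eqref{9} one should set $\gamma=-\alpha$, and then assert both $\nu-\gamma=\nu+\alpha$ and $2\gamma-\nu-1=-2\alpha-\nu+1$. These two equations are \emph{inconsistent}: the first gives $\gamma=-\alpha$, the second gives $\gamma=1-\alpha$. The integrand in the target identity has $\rho^{-2\alpha-\nu+1}$ and $(x^2/\rho^2;q^2)_{-\alpha}$; comparing with \eqref{9}, which has $\rho^{2\gamma-\nu-1}$ and $(x^2/\rho^2;q^2)_{\gamma-1}$, forces $\gamma=1-\alpha$. With the correct choice $\gamma=1-\alpha$ the hypothesis $\Re(\gamma)>0$ of \eqref{9} becomes $\Re(\alpha)<1$, so no analytic-continuation argument is needed in the range $0<\Re(\alpha)<1$ actually used in the paper; and the constant $(q^2;q^2)_\infty/(q^{2(1-\alpha)};q^2)_\infty$ equals $\Gamma_{q^2}(1-\alpha)/(1-q^2)^{\alpha}$ directly, with no reflection formula required.

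A second consequence of the wrong substitution is that the Bessel order you obtain from \eqref{9} is off by one. With $\gamma=1-\alpha$ the right-hand side of \eqref{9} carries $J_{\nu-\gamma}=J_{\nu+\alpha-1}(ux/q;q^2)$, multiplied by $x^{1-\alpha-\nu}=x^{-(\nu+\alpha-1)}$. This is already in the form $z^{-\mu}J_\mu$, so one applies \eqref{a} (which \emph{raises} the order), not \eqref{b}; no preliminary multiplication by a power of $x$ is needed before differentiating. The minus sign and the factor $q^{2\alpha+\nu-2}$ in \eqref{int2} then drop out immediately from \eqref{a} and the argument shift $ux/q\mapsto ux$. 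Your paragraph invoking \eqref{b} with $\mu=\nu+\alpha+1$ would produce $J_{\nu+\alpha}$ only after \emph{lowering} from $J_{\nu+\alpha+1}$, which is not what \eqref{9} gives under either substitution.
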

\begin{proof}
The proof is similar to the proof of Corollary~\ref{cor:1} and is omitted.
\end{proof}

\vskip .5 cm

Koornwinder and Swarttouw~\cite{H.T} introduced the following inverse pair of $q$-Hankel integral transforms under the side condition $f,\,g\in L_{q}^2(\mathbb{R}_{q,+})$:
\be\label{HT}
g(\lambda)=\int_{0}^{\infty}xf(x)J_{\nu}(\lambda\,x;q^2)\,d_qx;\quad f(x)=\int_{0}^{\infty}\lambda f(\lambda)J_{\nu}(\lambda\,x;q^2)\,d_q\lambda,
\ee
where $\lambda,\,x\in\mathbb{R}_{q,+}$.

In the following, we introduce
A $q$-analogue of the Riemann-Liouville fractional integral operator is introduced in \cite{W.A} by Al-Salam through
\begin{equation*}
I^{\alpha}_q f(x):= \frac{x^{\alpha-1}}{\Gamma_q(\alpha)}\int_0^x (qt/x;q)_{\alpha-1} f(t)\,d_qt,
\end{equation*}
$\alpha \not \in \{-1,-2,...\}$. In \cite{R.P}, Agarwal defined the $q$-fractional derivative to be
\begin{equation*}
D_q^{\alpha} f(x):= I^{-\alpha}_q f(x)= \frac{x^{-\alpha-1}}{\Gamma_q(-\alpha)}\int_0^x (qt/x;q)_{-\alpha-1} f(t)\,d_qt.
\end{equation*}
We shall also use that
\be\label{Abel}
I_q^{\alpha}D_q^\alpha f(x)=f(x)-I_q^{1-\alpha}f(0)\frac{x^{\alpha-1}}{\Gamma_q(\alpha)},\; 0<\alpha<1.
\ee
see~\cite[Lemma 4.17]{M.H}.

In the following, we introduce  some $q$-fractional operators  that we use in solving the triple $q$-integral equations under consideration. The technique of using fractional operators in solving dual and triple integral equations is not new. See for example~\cite{Agarwal,Pet,O.A}.
 In~\cite{W.A}, Al-Salam defined a  two parameter $q$-fractional operator by
\begin{equation*}\label{ws2}
K_{q}^{\eta,\alpha}\phi(x):=\frac{q^{-\eta}x^{\eta}}{\Gamma_q(\alpha)}\int_{x}^{\infty}\big(x/t;q\big)_{\alpha-1}t^{-\eta-1}\phi(tq^{1-\alpha})\,d_qt,
\end{equation*}
$\alpha\neq -1,-2,\ldots$. This is a $q$-analogue  of the
Erd\'{e}lyi and Sneddon fractional operator, cf.~\cite{AE,ES},
\[
K^{\eta,\alpha}f(x)=\frac{x^{\eta}}{\Gamma(\alpha)}\int_{x}^{\infty}\big(t-x)^{\alpha-1}t^{-\eta-1}f(t)\,
dt.
\]
In~\cite{O.A}, the authors introduced  a slight modification of the operator
$K_q^{\eta,\alpha}$. This operator is denoted by
${\mathcal{K}_q^{\eta,\alpha}}$ and defined by
\begin{equation}\label{kkk}\begin{split}
\mathcal{K}_q^{\eta,\alpha}\phi(x):=&\frac{q^{-\eta}x^{\eta}}{{\Gamma}_q(\alpha)}\int_x^\infty(x/t;q)_{\alpha-1}t^{-\eta-1}\phi(qt)d_qt,
\end{split}\end{equation}
where $ \alpha\neq -1,-2,\ldots$.
In case of $\eta=-\alpha$,  we set
\begin{equation}\label{K}
\begin{gathered}
\begin{split}\mathcal{K}_q^{\alpha}f(x)&:=q^{-\alpha}x^{\alpha}q^{\frac{\alpha(\alpha-1)}{2}}\mathcal{K}_q^{-\alpha,\alpha}f(x)\\
&=\frac{q^{\frac{-\alpha(\alpha-1)}{2}}}{\Gamma_q(\alpha)}\int_{x}^{\infty}t^{\alpha-1}(x/t;q)_{\alpha-1}f(qt)\,d_qt.
\end{split}
\end{gathered}
\end{equation}
This is a slight modification of the operator $K^\alpha f(x;q)$ introduced in ~\cite[(19.4.8)]{M.P} and by Al-salam in~\cite{W.A}. Note that this operator satisfies the following semigroup identity
\begin{equation}\label{prop.k}
\mathcal{K}_q^{\alpha} \mathcal{K}_q^{\beta} \phi(x)= \mathcal{K}_q^{\alpha+\beta} \phi(x), \quad \mbox { for all } \alpha \mbox { and } \beta.
\end{equation}
The proof of \eqref{prop.k} is completely similar to the proof of~\cite[Theorem 5.13]{M.H} and is omitted.

\begin{prop}
Let $\alpha \in \mathbb{C}, x\in B_q$. If $\Phi\in L_{q,\alpha -1}(B_q)$ and $G(x)=D_{q,x} \mathcal{K}_q^{\alpha}\Phi(x)$, then $$ \Phi(x)=-q^{\alpha-1 \dot{\dot{\dot{}}}} \mathcal{K}_q^{1-\alpha}G(\frac{x}{q}).$$
\end{prop}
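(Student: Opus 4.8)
The plan is to treat this as a $q$-analogue of the classical inversion of an Erdélyi–Kober–type fractional integral equation, using the semigroup identity \eqref{prop.k} together with the $q$-Abel identity \eqref{Abel} transported to the operator $\mathcal{K}_q^{\alpha}$. The statement says that knowing $G = D_{q,x}\mathcal{K}_q^{\alpha}\Phi$ we can recover $\Phi$; the natural move is to first undo the $q$-derivative by a suitable fractional $q$-integration on the half-line $B_q$, and then undo the remaining $\mathcal{K}_q^{\alpha}$ by applying $\mathcal{K}_q^{1-\alpha}$ and invoking \eqref{prop.k} to collapse $\mathcal{K}_q^{1-\alpha}\mathcal{K}_q^{\alpha}=\mathcal{K}_q^{1}$, which is (up to normalisation) an ordinary $q$-antiderivative. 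Concretely, set $H(x):=\mathcal{K}_q^{\alpha}\Phi(x)$, so that $G=D_{q,x}H$. First I would write $H$ as a $q$-integral of $G$ from $x$ to $\infty$: since $H\in$ an appropriate $L_q$-space on $B_q$ and $D_{q,x}H=G$, one has $H(x)=-\int_x^{\infty} G(t)\,d_qt$ (the boundary term at $\infty$ vanishing because of the integrability hypothesis $\Phi\in L_{q,\alpha-1}(B_q)$, which propagates to decay of $H$).

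The second and main step is to apply $\mathcal{K}_q^{1-\alpha}$ to $H$ and use \eqref{prop.k}: $\mathcal{K}_q^{1-\alpha}H = \mathcal{K}_q^{1-\alpha}\mathcal{K}_q^{\alpha}\Phi = \mathcal{K}_q^{1}\Phi$. From the definition \eqref{K} with parameter $1$, $\mathcal{K}_q^{1}\Phi(x)=\int_x^{\infty}\Phi(qt)\,d_qt$ (the $\Gamma_q(1)=1$ and the $(x/t;q)_0=1$ make this an honest $q$-integral), so $\mathcal{K}_q^{1}\Phi$ is an explicit $q$-primitive of $-q^{?}\Phi$. Differentiating via $D_{q,x}$ then returns $\Phi$ up to an explicit power of $q$ and an argument shift $x\mapsto x/q$ coming from the $\phi(qt)$ inside the kernel — this is exactly where the factor $-q^{\alpha-1}$ and the evaluation at $x/q$ in the claimed formula originate. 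So the chain is: $\Phi$ is recovered as $-q^{\alpha-1}$ times $D_{q,x}\mathcal{K}_q^{1-\alpha}H$ evaluated appropriately, and substituting $H(x)=-\int_x^\infty G(t)\,d_qt$ and moving $D_{q,x}$ inside (it cancels the $\int_x^\infty$) gives $\Phi(x)=-q^{\alpha-1}\mathcal{K}_q^{1-\alpha}G(x/q)$.

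In carrying this out I would need to (i) justify that $H:=\mathcal{K}_q^{\alpha}\Phi$ lies in the space on which $\mathcal{K}_q^{1-\alpha}$ is defined and on which \eqref{prop.k} was proved, using the mapping properties of $\mathcal{K}_q^{\eta,\alpha}$ on the weighted $L_q(B_q)$ spaces introduced in Section~\ref{q-not.} — this is the analogue of checking hypotheses in \cite[Theorem 5.13]{M.H}; (ii) verify that the boundary contribution at $\infty$ in the step $D_{q,x}H=G \Rightarrow H(x)=-\int_x^\infty G$ genuinely vanishes, which should follow from $\Phi\in L_{q,\alpha-1}(B_q)$ forcing $x^{\alpha-1}H(x)\to 0$ along $B_q$; and (iii) bookkeep the powers of $q$ and the argument shifts produced by the normalisation in \eqref{K} and by the $\phi(qt)$ inside the kernels of $\mathcal{K}_q^{\alpha}$ and $\mathcal{K}_q^{1-\alpha}$ and by $D_{q,x}$ (recall $D_{q,x}f(x)=(f(x)-f(qx))/((1-q)x)$).

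The main obstacle I anticipate is precisely this last $q$-bookkeeping: in the $q$-world the semigroup law \eqref{prop.k} holds for the operators exactly as normalised in \eqref{K}, but each application of a fractional operator or of $D_{q,x}$ introduces a shift $x\mapsto qx$ or $x\mapsto x/q$ and a factor $q^{c}$, and these must be tracked with care to land on the stated constant $-q^{\alpha-1}$ and the stated argument $x/q$ rather than some neighbouring power/shift; a useful sanity check is the boundary case $\alpha=1$, where $\mathcal{K}_q^{1-\alpha}=\mathcal{K}_q^{0}=\mathrm{id}$ and the formula must reduce to the plain statement that $\Phi$ is recovered from its $q$-antiderivative $G$. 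A secondary, milder technical point is confirming that $D_{q,x}$ and the $q$-integral operators commute in the way used, which is standard but should be cited from \cite{M.H}.
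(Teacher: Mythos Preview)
Your high-level idea (use the semigroup identity \eqref{prop.k} to invert) is exactly the philosophy of the paper's proof, but the paper executes it along a shorter and cleaner route than the one you sketch. The paper does \emph{not} integrate $G$ to recover $H=\mathcal{K}_q^{\alpha}\Phi$ and then apply $\mathcal{K}_q^{1-\alpha}$ followed by a $q$-derivative. Instead it proves, by a direct computation with the kernel, the index-lowering identity
\[
G(x)=D_{q,x}\mathcal{K}_q^{\alpha}\Phi(x)=-q^{1-\alpha}\,\mathcal{K}_q^{\alpha-1}\Phi(qx),
\]
obtained by writing out the difference quotient, using $D_{q,x}(x/t;q)_{\alpha-1}=-[\alpha-1]\,t^{-1}(qx/t;q)_{\alpha-2}$, and recombining the two pieces into a single integral over $[qx,\infty)$. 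Once this is in hand, one simply rewrites it as $\mathcal{K}_q^{\alpha-1}\Phi(x)=-q^{\alpha-1}G(x/q)$ and applies $\mathcal{K}_q^{1-\alpha}$ to both sides, invoking \eqref{prop.k}. No antiderivative of $G$, no boundary term at infinity, and no commutation of $D_q$ with a fractional operator are needed.

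The specific gap in your plan is the step ``moving $D_{q,x}$ inside (it cancels the $\int_x^\infty$)''. After you reach $\Phi(x)=-[D_{q}\mathcal{K}_q^{1-\alpha}H](x/q)$ and substitute $H(qt)=-\int_{qt}^{\infty}G(s)\,d_qs$, the operator $D_{q,x}$ acts on $x$, which appears in the \emph{outer} limit of integration and in the kernel $(x/t;q)_{-\alpha}$, not on the inner antiderivative of $G$; there is no cancellation of the type you describe. To evaluate $D_{q,x}\mathcal{K}_q^{1-\alpha}H$ you are forced to differentiate the $q$-Pochhammer kernel, and that calculation is precisely the paper's identity above with $\alpha$ replaced by $1-\alpha$ (or $2-\alpha$, depending on how you organize it). In other words, what you list under (iii) as ``$q$-bookkeeping'' is the actual content of the proof, not a side issue. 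A cleaner way to run your strategy, if you want to avoid that computation, is to skip the extra $D_q$ altogether: observe that $H=-\mathcal{K}_q^{1}[G(\cdot/q)]$ (since $\mathcal{K}_q^{1}f(x)=\int_x^\infty f(qt)\,d_qt$), then apply $\mathcal{K}_q^{-\alpha}$ directly and use \eqref{prop.k} to get $\Phi=\mathcal{K}_q^{-\alpha}H=-\mathcal{K}_q^{1-\alpha}[G(\cdot/q)]$, which is the desired formula.
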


\begin{proof}
First, we show that $$G(x)= -q^{1-\alpha} \mathcal{K}_q^{(\alpha-1)}\Phi(qx).$$ According to (\ref{K}), we have
\be\begin{split}\label{I}
G(x)&= \frac{q^{-\alpha(\alpha-1)/2}}{\Gamma_q(\alpha)}D_{q,x} \int_x^\infty t^{\alpha-1}(x/t;q)_{\alpha-1}\Phi(qt)\,d_qt \cr &=\frac{q^{-\alpha(\alpha-1)/2}}{x(1-q) \Gamma_q(\alpha)} \Big[\int_x^\infty t^{\alpha-1}(x/t;q)_{\alpha-1}\Phi(qt)\,d_qt- \int_{qx}^\infty t^{\alpha-1}(qx/t;q)_{\alpha-1}\Phi(qt)\,d_qt\Big].
\end{split}\ee
Note that $$ \int_{qx}^\infty g(t)\,d_qt= \int_{x}^\infty g(t)\,d_qt + x(1-q)g(x),$$
 so, (\ref{I}) can be written  as
$$ G(x)= \frac{q^{\frac{-\alpha(\alpha-1)}{2}}}{\Gamma_q(\alpha)}\Big[\int_x^\infty t^{\alpha-1}\Big(D_{q,x}(x/t;q)_{\alpha-1}\Big)\Phi(qt)\,d_qt - x^{\alpha-1}(q;q)_{\alpha-1}\Phi(qx)\Big].$$
 But
\[
D_{q,x}(x/t;q)_{\alpha-1}= -\frac{(1-q^{\alpha-1})}{t(1-q)}(qx/t;q)_{\alpha-2}= -\frac{1}{t}[\alpha-1](qx/t;q)_{\alpha-2}.
\]
Hence,
\begin{eqnarray*}
G(x)&=&- \frac{q^{-\alpha(\alpha-1)/2}}{\Gamma_q(\alpha)}[\alpha-1]\int_x^\infty t^{\alpha-2} (qx/t;q)_{\alpha-2} \Phi(qt)\,d_qt - x^{\alpha-1}(q;q)_{\alpha-1}\Phi(qx)\cr &=& - \frac{q^{-\alpha(\alpha-1)/2}}{\Gamma_q(\alpha)}[\alpha-1]\int_{qx}^\infty t^{\alpha-2} (qx/t;q)_{\alpha-2} \Phi(qt)\,d_qt \cr &=& - \frac{q^{-\alpha(\alpha-1)/2}}{\Gamma_q(\alpha-1)}\int_{qx}^\infty t^{\alpha-2} (qx/t;q)_{\alpha-2} \Phi(qt)\,d_qt = - q^{1-\alpha}  \mathcal{K}_q^{(\alpha-1)}\Phi(qx)
\end{eqnarray*}
This implies, $$  \mathcal{K}_q^{(\alpha-1)}\Phi(x)= -q^{\alpha-1}G(x/q), $$
 and by using (\ref{prop.k}), we obtain the result and completes the proof.
\end{proof}

% ----------------------------------------------------------------
\section{{\bf A system of triple $q$-Integral Equations}}
The goal of this section is to solve the following triple $q$-integral equations:

\begin{eqnarray}\label{F:1}
 \int_0^\infty  \psi (u) J_\nu (u\rho; q^2) \,d_q u&=& f_1 (\rho), \quad  \rho\in A_{q,a},\\\label{F:2}
   \int_0^\infty  u^{-2\alpha} \psi(u)\left[1+w(u)\right] J_\nu (u\rho; q^2)\,d_q u&=& f_2 (\rho), \quad   \rho\in A_{q,b}\cap B_{q,a},\\\label{F:3}
   \int_0^\infty  \psi (u) J_\nu (u\rho; q^2) \,d_q u&=& f_3 (\rho), \quad   \rho\in B_{q,b},
\end{eqnarray}
where $0<a<b<\infty$, and  $\alpha$, $\nu$  are  complex numbers satisfying \[\Re(\nu) > -1,\quad\mbox{and}\quad 0< \Re(\alpha)<1. \]   $\psi$ is an unknown function to be determined, and $f_i $ $(i=1,2,3)$ are known functions, and $w$ is a non-negative bounded function defined on $\mathbb{R}_{q,+}$.

\vskip .5 cm
Clearly from~\eqref{c}, a sufficient condition for the convergence of the $q$-integrals on the left hand side of \eqref{F:1}--\eqref{F:2} is
that
\be \label{SC}\psi\in L_{q,\nu}(\mathbb{R}_{q,+})\cap L_{q,\nu-2\alpha}(\mathbb{R}_{q,+}).\ee
   For getting the solution of the triple $q$-integral equations \eqref{F:1}--\eqref{F:3}, we define a function $C$  by
   \[C(u):= u^{-2\alpha}\psi (u)\left[1+w(u)\right],\quad u\in\mathbb{R}_{q,+}.\]
      Hence,
 $$\psi(u)= u^{2\alpha} C(u)- u^{2\alpha} C(u)\left[\frac{w(u)}{1+w(u)}\right],$$ and the triple $q$-integral equation (\ref{F:1})--(\ref{F:3}) can be represented as:
 \begin{equation}\label{1}
 \begin{gathered}
 \int_0^\infty  u^{2\alpha} C(u) J_\nu (u\rho; q^2) \,d_q u- \int_0^\infty  u^{2\alpha} C(u)\left[\frac{w(u)}{1+w(u)}\right] J_\nu (u\rho; q^2) \,d_q u \\
 = f_1 (\rho), \quad\quad   \rho\in A_{q,a}
 \end{gathered}
 \end{equation}

 \begin{equation}\label{2}
 \int_0^\infty  C(u) J_\nu (u\rho; q^2)\,d_q u= f_2 (\rho), \quad   \rho\in A_{q,b}\cap B_{q,a}
 \end{equation}

\begin{equation}\label{3}
\begin{gathered}
 \int_0^\infty u^{2\alpha} C(u) J_\nu (u\rho; q^2) \,d_q u- \int_0^\infty  \dfrac{w(u)}{1+w(u)}u^{2\alpha}C(u) J_\nu (u\rho; q^2) \,d_q u \\
 = f_3 (\rho), \quad   \rho\in B_{q,b}
 \end{gathered}
 \end{equation}
 Now assume that $C:=C_1+C_2$ such that
 \[\int_0^\infty  C_1(u) J_\nu (u\rho; q^2) \,d_q u= g_1 (\rho), \;  \rho\in A_{q,b},\]
 \[\int_0^\infty  C_2(u) J_\nu (u\rho; q^2) \,d_q u= g_2 (\rho), \,\rho\in B_{q,a},\]
  where \[ f_1(\rho)=g_1(\rho)+g_2(\rho),\quad \rho\in A_{q,b}\cap B_{q,a}.\] So, the triple $q$-integral equations (\ref{1})--(\ref{3}) can be rewritten in the following form:
 \begin{equation}\label{4}
 \begin{gathered}
 \int_0^\infty  u^{2\alpha}\left[C_1(u)+C_2(u)\right] J_\nu (u\rho; q^2) \,d_q u- \\ \int_0^\infty u^{2\alpha}\left[C_1(u)+C_2(u)\right]\dfrac{w(u)}{1+w(u)} J_\nu (u\rho; q^2) \,d_q u= f_1 (\rho), \quad  \rho\in A_{q,a},
 \end{gathered}
 \end{equation}
 \begin{equation}\label{5}
 \int_0^\infty  C_1(u) J_\nu (u\rho; q^2) \,d_q u= g_1 (\rho), \quad   \rho\in A_{q,b},
 \end{equation}
 \begin{equation}\label{6}
 \int_0^\infty  C_2(u) J_\nu (u\rho; q^2) \,d_q u= g_2 (\rho), \quad   \rho\in B_{q,a},
 \end{equation}
 \begin{equation}\label{7}
 \begin{gathered}
 \int_0^\infty  u^{2\alpha}\left[C_1(u)+C_2(u)\right] J_\nu (u\rho; q^2) \,d_q u - \\ \int_0^\infty  \dfrac{w(u)}{1+w(u)}u^{2\alpha}\left[C_1(u)+C_2(u)\right] J_\nu (u\rho; q^2) \,d_q u = f_3 (\rho), \quad   \rho\in B_{q,b}.
 \end{gathered}
 \end{equation}

 \begin{prop}\label{prop:1}
 Let the functions $\psi_1$, $\psi_2$ be defined by
 \begin{equation}\label{13}
\psi_1(x):=\int_0^\infty u^\alpha C_1(u) J_{\nu-\alpha}(ux; q^2)\,d_qu,\quad x\in B_{q,b},
\end{equation}
\begin{equation}\label{131}
\psi_2(x):=\int_0^\infty u^\alpha C_2(u) J_{\nu+\alpha}(ux; q^2)\,d_qu, \quad   x \in A_{q,a},                                                                                                        \end{equation}
provided that $0<\Re \alpha<1$, $\Re \nu>-1$, $\Re(\nu+\alpha)>0$ and
$C_1\in L_{q,\nu}(\mathbb{R}_{q,+})$, $C_2\in L_{q,-t}(\mathbb{R}_{q,+})$ where
\[\Re \nu+2>\Re t>-\Re \nu+2\Re(1-\alpha).\]
Then for $u\in\mathbb{R}_{q,+},$
  \begin{equation}\label{141}
C_1(u)= u^{1-\alpha} \left[\int_0^b x \Phi_1(x)J_{\nu-\alpha}(ux; q^2)\,d_qx + \int_b^\infty  x \Psi_1(x)J_{\nu-\alpha}(ux; q^2)\,d_qx\right],
\end{equation}
\begin{equation}\label{14}
C_2(u)= u^{1-\alpha} \left[\int_0^a  x \Psi_2(x)J_{\nu+\alpha}(ux; q^2)\,d_qx+ \int_a^\infty  x \Phi_2(x)J_{\nu+\alpha}(ux; q^2)\,d_qx\right] ,
\end{equation}
where
\begin{equation}\begin{split}\label{phi1}\Phi_1(x)&= \frac{(1-q^2)^{\alpha}x^{\alpha-\nu-1}}{\Gamma_{q^2}(1-\alpha )} D_{q,x}\left[x^{-2\alpha} \int_0^x
g_1 (\rho)  \rho^{\nu+1} ({q^2\rho^2}/{x^2}; q^2)_{-\alpha} \,d_q\rho\right]\\
&=(1-q^2)^{\alpha}x^{\alpha-\nu-1}D_{q^2,x}^{\alpha}\left(t^{\nu/2}g_1(\sqrt{t})\right)(x),\;x\in  A_{q,b},
\end{split}\end{equation}
\begin{equation}\label{phi2}\begin{split}\Phi_2(x)&= -\frac{(1-q^2)^{\alpha}q^{2\alpha+\nu-2}x^{\alpha+\nu-1}}{\Gamma_{q^2}(1-\alpha )} D_{q,x} \int_x^\infty g_2(\rho) \rho^{1-2\alpha-\nu} ({x^2}/{\rho^2}; q^2)_{-\alpha}  \,d_q \rho, \\
&=-q^{\frac{\alpha(1-\alpha)}{2}}(1-q^2)^{\alpha}x^{\alpha+\nu-1}D_{q^2,x}\mathcal{K}_{q^2}^{(1-\alpha)}\left[t^{-\nu/2}g_2(\sqrt{t})\right](\frac{x}{q^2}),\;x\in B_{q,a}.
\end{split}
 \end{equation}
\end{prop}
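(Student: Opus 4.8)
The plan is to invert each of the two defining $q$-Hankel-type representations \eqref{13} and \eqref{131} separately, using the fractional $q$-operators developed in Section~\ref{q-not.} to untangle the order shift from $\nu$ to $\nu\mp\alpha$. I would start with $C_1$. The idea is to substitute \eqref{131}---more precisely the companion dual pair---into $\psi_1$ and recognise the order-lowering relation: by Corollary~\ref{cor:1} applied with $\nu\rightsquigarrow\nu$ and the given $\alpha$, the factor $u^{\alpha}J_{\nu-\alpha}(ux;q^2)$ inside \eqref{13} is exactly the $q$-derivative of a fractional $q$-integral of $\rho^{\nu+1}({q^2\rho^2}/{x^2};q^2)_{-\alpha}J_\nu(u\rho;q^2)$. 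Thus $\psi_1$ is obtained from $g_1$ (on $A_{q,b}$) by the fractional operator $(1-q^2)^{\alpha}x^{\alpha-\nu-1}D_{q^2,x}^{\alpha}$ acting on $t\mapsto t^{\nu/2}g_1(\sqrt t)$ — this is precisely the asserted formula \eqref{phi1} for $\Phi_1$, and the two displayed forms of $\Phi_1$ are reconciled by unwinding the definition of $D_{q^2,x}^{\alpha}=I_{q^2}^{-\alpha}$ together with \eqref{99}. On $B_{q,b}$ the contribution is $\Psi_1$, which plays the role of the still-unknown ``remainder'' and will be pinned down later in the paper by the Fredholm reduction; here it is simply carried along. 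With $\psi_1$ expressed on all of $\mathbb{R}_{q,+}$ as the pair $(\Phi_1,\Psi_1)$, I would then apply the $q$-Hankel inversion \eqref{HT} for the index $\nu-\alpha$ to recover $u^{\alpha}C_1(u)$, giving \eqref{141} after dividing by $u^{\alpha}$; the hypothesis $C_1\in L_{q,\nu}(\mathbb{R}_{q,+})$ together with the decay estimate \eqref{c} is what legitimates the interchange of $q$-integrals and guarantees membership in $L_q^2(\mathbb{R}_{q,+})$ so that \eqref{HT} applies.

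For $C_2$ the argument is parallel but uses the order-raising corollary \eqref{int2} in place of \eqref{8}: the factor $u^{\alpha}J_{\nu+\alpha}(ux;q^2)$ in \eqref{131} is $-\,(1-q^2)^{\alpha}q^{2\alpha+\nu-2}x^{\alpha+\nu-1}\Gamma_{q^2}(1-\alpha)^{-1}D_{q,x}$ of the tail $q$-integral $\int_x^\infty \rho^{-2\alpha-\nu+1}({x^2}/{\rho^2};q^2)_{-\alpha}J_\nu(u\rho;q^2)\,d_q\rho$, so that $\psi_2$ on $A_{q,a}$ is obtained from $g_2$ (on $B_{q,a}$) by exactly the operator displayed in \eqref{phi2}; the second form of $\Phi_2$ follows by rewriting that tail integral through the modified Erdélyi–Kober operator $\mathcal{K}_{q^2}^{(1-\alpha)}$ of \eqref{K}, using the evaluation in \eqref{9} and the shift $x\mapsto x/q^2$ that the definition of $\mathcal{K}_{q^2}^{\beta}$ introduces. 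On $B_{q,a}$ the contribution $\Psi_2$ is again the carried-along remainder. Inverting once more via \eqref{HT}, now with index $\nu+\alpha$ (legitimate because $\Re(\nu+\alpha)>0$ and $C_2\in L_{q,-t}(\mathbb{R}_{q,+})$ with $t$ in the stated window, which is exactly the integrability needed for the $J_{\nu+\alpha}$-Hankel transform), yields $u^{\alpha}C_2(u)$ and hence \eqref{14}.

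I expect the main obstacle to be bookkeeping rather than conceptual: keeping the exponents, the powers of $q$, and the constants $(1-q)$ versus $(1-q^2)$ consistent through the order-changing steps, and—more substantively—justifying the interchange of the Jackson $q$-integrals when one substitutes the dual representation into $\psi_1$ or $\psi_2$. That interchange is the one place where the precise integrability hypotheses ($C_1\in L_{q,\nu}$, $C_2\in L_{q,-t}$ with $\Re\nu+2>\Re t>-\Re\nu+2\Re(1-\alpha)$) and the two-sided bound \eqref{c} must be combined to produce an absolutely convergent double $q$-series; once dominated convergence on $\mathbb{R}_{q,+}$ (i.e. term-by-term summation of an absolutely summable double series) is in force, Fubini is automatic and everything else is the algebra of Corollaries~\ref{cor:1} and the order-raising corollary together with Proposition~\ref{prop.1}. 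A secondary point to verify carefully is that the operator $D_{q^2,x}^{\alpha}$ appearing in the compact form of $\Phi_1$ really equals the bracketed expression in the first line of \eqref{phi1}; this is the content of relation \eqref{Abel} applied with the role of ``$f(0)$'' being handled by the factor $({q^2\rho^2}/{x^2};q^2)_{-\alpha}$ vanishing appropriately, and it should be checked that the boundary term in \eqref{Abel} does not contribute under the stated hypotheses.
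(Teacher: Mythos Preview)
Your approach is correct and essentially identical to the paper's: both derive the identities $\int_0^\infty u^\alpha C_1(u) J_{\nu-\alpha}(ux;q^2)\,d_qu=\Phi_1(x)$ on $A_{q,b}$ and $\int_0^\infty u^\alpha C_2(u) J_{\nu+\alpha}(ux;q^2)\,d_qu=\Phi_2(x)$ on $B_{q,a}$ by applying Corollary~\ref{cor:1} and \eqref{int2} to equations \eqref{5} and \eqref{6} (which is what you must mean by ``the companion dual pair,'' not \eqref{131}), then patch these with the definitions \eqref{13}--\eqref{131} and invert via the $q$-Hankel pair \eqref{HT}. Your only slip is notational---you write ``$\psi_1$ is obtained from $g_1$ on $A_{q,b}$'' when you mean $\Phi_1$, since $\psi_1$ lives on $B_{q,b}$---but the argument itself, including the Fubini justification via \eqref{c} and the stated integrability hypotheses, matches the paper's proof exactly.
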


\begin{proof}
 We start with proving \eqref{phi1}. Let $x\in A_{q,b}$.  Multiply both sides of (\ref{5}) by $x^{-2\alpha} \rho^{\nu+1} ({q^2\rho^2}/{x^2}; q^2)_{-\alpha}$ and integrate with respect to $\rho$ from $0$ to $x$, we get
\begin{equation}\label{110}
\begin{gathered}
\int_0^x x^{-2\alpha} \rho^{\nu+1} ({q^2\rho^2}/{x^2}; q^2)_{-\alpha}  \int_0^\infty  C_1(u) J_\nu (u\rho; q^2) \,d_q u\,d_q \rho = \\ \int_0^x g_1 (\rho) x^{-2\alpha} \rho^{\nu+1} ({q^2\rho^2}/{x^2}; q^2)_{-\alpha} \,d_q\rho.
\end{gathered}
 \end{equation}
We can prove that the double $q$-integral on the left hand side of \eqref{110} is absolutely convergent for $0<\Re (\alpha)<1$ and for $\Re(\nu)>-1$ provided that $C_1\in L_{q,\nu}(\mathbb{R}_{q,+})$. So, we can interchange the order of the $q$-integrations to obtain
\begin{equation}\label{111}
\begin{gathered}
\int_0^\infty C_1(u) x^{-2\alpha} \int_0^x \rho^{\nu+1} (\frac{q^2\rho^2}{x^2}; q^2)_{-\alpha} J_\nu (u\rho; q^2) \,d_q \rho\, d_q u = \\ \int_0^x g_1 (\rho) x^{-2\alpha} \rho^{\nu+1} (\frac{q^2\rho^2}{x^2}; q^2)_{-\alpha}\, d_q\rho.
\end{gathered}
 \end{equation}
Calculate the $q$-derivative of the two sides of (\ref{111}) with respect to $x$ and using (\ref{20}), we get
\begin{equation}\label{121}
\int_0^\infty u^\alpha C_1(u) J_{\nu-\alpha}(ux; q^2)\,d_qu = \Phi_1(x), \quad x\in A_{q,b},
\end{equation}
 where $$\Phi_1(x)= \frac{(1-q^2)^{\alpha}x^{\alpha-\nu-1}}{\Gamma_{q^2}(1-\alpha )} D_{q,x}\left[x^{-2\alpha} \int_0^x
g_1 (\rho) x^{-2\alpha} \rho^{\nu+1} (\frac{q^2\rho^2}{x^2}; q^2)_{-\alpha} \,d_q\rho\right].$$

Now, we prove \eqref{phi2}. Let $x\in B_{q,a}$. Multiply both sides of (\ref{6}) by  $\rho^{-2\alpha-\nu+1}({x^2}/{\rho^2}; q^2)_{-\alpha}$ and $q$-integrate with respect to $\rho$ from $x$ to $\infty$, we get
\begin{equation}\label{10}
\begin{gathered}
\int_x^\infty \rho^{-2\alpha-\nu+1}({x^2}/{\rho^2} ; q^2)_{-\alpha} \int_0^\infty  C_2(u) J_\nu (u\rho; q^2) \,d_q u \,d_q \rho =\\\int_x^\infty g_2 (\rho) \rho^{-2\alpha-\nu+1} ({x^2}/{\rho^2}; q^2)_{-\alpha} \,d_q\rho.
 \end{gathered}
 \end{equation}
From (\ref{c}), we can prove that $u^t\,J_{\nu}(u;q^2)$ is bounded on $R_{q,+}$ provided that $\Re(t+\nu)>-1$. So, if we take $t$ such that   $\Re \nu+2>\Re t> -\Re \nu+2\Re(1-\alpha)$, we can prove that the double $q$-integral \[\int_x^\infty \rho^{1-2\alpha-\nu} ({x^2}/{\rho^2}; q^2)_{-\alpha} \int_0^\infty  C_2(u) J_\nu (u\rho; q^2) \,d_q u \,d_q \rho \]
  is absolutely convergent and we can interchange the order of the $q$-integration to obtain
\begin{equation}\label{11}
\begin{gathered}
\int_0^\infty C_2(u) \int_x^\infty   \rho^{1-2\alpha-\nu} ({x^2}/{\rho^2}; q^2)_{-\alpha} J_\nu (u\rho; q^2) \,d_q \rho\, d_q u \\ = \int_x^\infty g_2 (\rho) \rho^{-2\alpha-\nu+1} ({x^2}/{\rho^2}; q^2)_{-\alpha} \,d_q\rho.
 \end{gathered}
 \end{equation}
Calculating the $q$-derivative of the  two sides of (\ref{11}) with respect to $x$ and using \eqref{int2} yields
\begin{equation}\label{12}
\int_0^\infty u^\alpha C_2(u) J_{\nu+\alpha}(ux; q^2)\,d_qu =\Phi_2(x), \quad x\in B_{q,a},
\end{equation}
 where $$ \Phi_2(x)= -\frac{(1-q^2)^{\alpha}q^{2\alpha+\nu-2}x^{\alpha+\nu-1}}{\Gamma_{q^2}(1-\alpha )} D_{q,x} \int_x^\infty g_2(\rho) \rho^{1-2\alpha-\nu} ({x^2}/{\rho^2}; q^2)_{-\alpha}  \,d_q \rho.$$

By the above argument, If we assume that $\psi_1$ and $\psi_2$ are given by \eqref{13} and \eqref{131}, then
\begin{equation}\label{eq:1}\int_{0}^{\infty}u^{\alpha}C_1(u)J_{\nu-\alpha}(ux;q^2)\,d_qx=\left\{\begin{array}{cc}\phi_1(x),&x\in A_{q,b},\\ \psi_1(x),&x\in B_{q,b},\end{array}\right.\end{equation}
and
\begin{equation}\label{eq:2} \int_{0}^{\infty}u^{\alpha}C_2(u)J_{\nu+\alpha}(ux;q^2)\,d_qx=\left\{\begin{array}{cc}\phi_2(x),&x\in B_{q,a},\\ \psi_2(x),&x\in A_{q,a}.\end{array}\right.\end{equation}
Hence, \eqref{141} and  \eqref{14} follow by applying the  inverse pair of $q$-Hankel transforms~\eqref{HT} on \eqref{eq:1} and \eqref{eq:2}. This completes the proof.
\end{proof}

\begin{rem}\label{rem1}
From  the definitions of $\psi_i$ and $\phi_i$, $i=1,2$, in Proposition \ref{prop:1}, one can verify that $x^{-\nu-\alpha}\phi_2$ is bounded function in $B_{q,a}$ and and $x^{-\nu-\alpha}\psi_2$ is bounded in $A_{q,a}$. Also, $x^{-\nu+\alpha}\phi_1$ is bounded in $A_{q,b}$ and $x^{-\nu+\alpha}\psi_1$ is bounded in $B_{q,b}$.
\end{rem}

\begin{prop}\label{prop.2}
For $\rho\in B_{q,b}$, $\Psi_1(\rho)$ satisfies the Fredholm $q$-integral equation of the form
\begin{equation}\label{Fred.1}
\psi_1(\rho)=\tilde{F}_1(\rho)+\frac{q^{-2\alpha^2-\alpha+\nu}}{(1-q)^{2}}\int_{b}^{\infty} x \psi_1(x) K_1(\rho,x)\,d_qx,
\end{equation}
where
$$ K_1(\rho,x)= \int_0^\infty \frac{uw(u)}{1+w(u)} J_{\nu-\alpha}(ux;q^2) J_{\nu-\alpha}(u\rho;q^2)\, d_qu, $$
\[\begin{gathered}\tilde{F}_1(\rho)=  F_1(\rho)\,-\\
\frac{q^{-2\alpha^2-\alpha+\nu}}{(1-q)^{2}} \int_{0}^{a} x\,\psi_2(x) \int_{0}^{\infty} \dfrac{u}{1+w(u)}J_{\nu+\alpha}(ux;q^2) J_{\nu-\alpha}(u\rho ;q^2)\,d_qu\,d_qx,
\end{gathered}
\]
and
 \[\begin{gathered}F_1(\rho)=\rho^{\nu-\alpha}\, \frac{q^{-2\alpha^2-\alpha+\nu}(1+q)(1-q^2)^{-\alpha}}{(1-q)^2 \Gamma_{q^2}(\alpha )} \int_{\rho}^\infty x^{2\alpha-\nu-1}f_3(qx) (\rho^2/x^2; q^2)_{\alpha-1} \,d_qx \, -\\ \frac{q^{-2\alpha^2-\alpha+\nu}}{(1-q)^{2}}\Big[\int_a^\infty x\Phi_2(x) \int_0^\infty \dfrac{u}{1+w(u)} J_{\nu+\alpha}(ux;q^2)\ J_{\nu-\alpha}(u\rho;q^2)\,d_qu \,d_qx\\
  + \int_0^b x \Phi_1(x)\int_0^\infty \dfrac{uw(u)}{1+w(u)} J_{\nu-\alpha}(ux;q^2) J_{\nu-\alpha}(u\rho;q^2)\,d_qu \,d_qx\Big].
 \end{gathered}
 \]
 \end{prop}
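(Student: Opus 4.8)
The plan is to derive \eqref{Fred.1} by applying to the third equation \eqref{7} (equivalently \eqref{F:3}) a Weyl-type $q$-fractional operator that converts the kernel $J_{\nu}(\cdot\,;q^{2})$ into $J_{\nu-\alpha}(\cdot\,;q^{2})$, and then reading off the various pieces. Fix $\rho\in B_{q,b}$. I would write \eqref{7} at the running point $\sigma$, multiply both sides by $\sigma^{2\alpha-\nu-1}\big(q^{2}\rho^{2}/\sigma^{2};q^{2}\big)_{\alpha-1}$, and Jackson-integrate with respect to $\sigma$ from $q\rho$ to $\infty$. Since $\rho\in B_{q,b}$, the variable $\sigma$ runs through $B_{q,b}$, so \eqref{7} is available on the whole range of integration. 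Interchanging the (absolutely convergent) order of $q$-integration and evaluating the inner kernel integral $\int_{q\rho}^{\infty}\sigma^{2\alpha-\nu-1}\big(q^{2}\rho^{2}/\sigma^{2};q^{2}\big)_{\alpha-1}J_{\nu}(u\sigma;q^{2})\,d_{q}\sigma$ by Proposition~\ref{int} (equivalently by \eqref{9}), the factor $J_{\nu}(u\sigma;q^{2})$ collapses to a constant multiple of $u^{-\alpha}J_{\nu-\alpha}(u\rho;q^{2})$; the $q$-shift carried by Proposition~\ref{int} is precisely what — after the substitution $\sigma=qy$ on the right-hand side — produces the argument $f_{3}(qy)$ and the factor $(\rho^{2}/y^{2};q^{2})_{\alpha-1}$ occurring in $F_{1}$. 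One is left with an identity valid for $\rho\in B_{q,b}$ expressing $\int_{0}^{\infty}u^{\alpha}[C_{1}(u)+C_{2}(u)]J_{\nu-\alpha}(u\rho;q^{2})\,d_{q}u-\int_{0}^{\infty}\frac{w(u)}{1+w(u)}u^{\alpha}[C_{1}(u)+C_{2}(u)]J_{\nu-\alpha}(u\rho;q^{2})\,d_{q}u$ in terms of the explicit $f_{3}$-integral.

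The next step is to identify and rearrange. By \eqref{13}--\eqref{eq:1}, the unweighted $C_{1}$-contribution $\int_{0}^{\infty}u^{\alpha}C_{1}(u)J_{\nu-\alpha}(u\rho;q^{2})\,d_{q}u$ equals $\psi_{1}(\rho)$ for $\rho\in B_{q,b}$, and this becomes the left-hand side of \eqref{Fred.1}. Into each of the remaining three integrals — the unweighted $C_{2}$-integral and the two weighted ones $\int\frac{w}{1+w}u^{\alpha}C_{i}J_{\nu-\alpha}(u\rho;q^{2})\,d_{q}u$ — I would insert the $q$-Hankel representations \eqref{141} of $C_{1}$ and \eqref{14} of $C_{2}$ and interchange the order of integration. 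The two $J_{\nu+\alpha}$-contributions coming from $C_{2}$ (the unweighted one, with coefficient $-1$, and the weighted one) recombine, via $1-\tfrac{w(u)}{1+w(u)}=\tfrac{1}{1+w(u)}$, into terms carrying the kernel $\int_{0}^{\infty}\frac{u}{1+w(u)}J_{\nu+\alpha}(ux;q^{2})J_{\nu-\alpha}(u\rho;q^{2})\,d_{q}u$: its $\int_{0}^{a}$-part, paired with $\psi_{2}$, is the term subtracted from $F_{1}$ to form $\tilde{F}_{1}$, and its $\int_{a}^{\infty}$-part, paired with $\Phi_{2}$, is the first bracketed term of $F_{1}$. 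The weighted $C_{1}$-contribution, via \eqref{141}, splits at $x=b$: the $\int_{0}^{b}$-part, paired with $\Phi_{1}$, gives the remaining bracketed term of $F_{1}$, while the $\int_{b}^{\infty}$-part, paired with $\psi_{1}$, is exactly $\int_{b}^{\infty}x\,\psi_{1}(x)K_{1}(\rho,x)\,d_{q}x$ with $K_{1}$ as in the statement. Collecting the pieces (and dividing by the overall scalar produced at the first step) gives \eqref{Fred.1}; note that $\tilde{F}_{1}$ still involves the other unknown $\psi_{2}$, so this is one of the two coupled Fredholm equations.

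What then remains is bookkeeping of two flavours. First, each interchange of $q$-integration (one double interchange at the first step, triple ones at the second) must be justified by absolute convergence; the relevant inputs are the Koornwinder--Swarttouw bound \eqref{c} for $J_{\nu\pm\alpha}(q^{n};q^{2})$, the standing hypothesis \eqref{SC}, the space conditions $C_{1}\in L_{q,\nu}(\mathbb{R}_{q,+})$ and $C_{2}\in L_{q,-t}(\mathbb{R}_{q,+})$ of Proposition~\ref{prop:1}, boundedness of $w$, and the weighted boundedness of $\psi_{1},\psi_{2},\Phi_{1},\Phi_{2}$ recorded in Remark~\ref{rem1} (which makes the $x$-integrals against them converge). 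Second, all the scalar factors — the $q$-powers delivered by Proposition~\ref{int}, the $\Gamma_{q^{2}}$-factor, and the $(1-q)$'s and $(1+q)$'s coming from passing between the Jackson integrals $\int_{q\rho}^{\infty}$ and $\int_{\rho}^{\infty}$ and between the $q$- and $q^{2}$-scales — must be combined carefully; this is what produces the constant $q^{-2\alpha^{2}-\alpha+\nu}/(1-q)^{2}$ multiplying the Fredholm integral and the accompanying $(1+q)(1-q^{2})^{-\alpha}/\Gamma_{q^{2}}(\alpha)$ in front of the $f_{3}$-integral in $F_{1}$.

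I expect the genuine difficulty to lie in this bookkeeping rather than in the structure: the $q$-Bessel functions are not small (growing super-exponentially for large argument and like $u^{\nu\pm\alpha}$ near $0$) and the bounded weight $w$ supplies no decay, so the convergence estimates for the iterated $q$-integrals need care; and the $q$-shift $J_{\nu-\alpha}(u\rho/q;q^{2})$ in Proposition~\ref{int}, together with the exact relation between $\int_{q\rho}^{\infty}$ and $\int_{\rho}^{\infty}$, must be tracked precisely, since this fixes both the shifted argument $f_{3}(qx)$ and the normalizing constants. Everything else is the algebra already indicated: the recombination $1-w/(1+w)=1/(1+w)$ and the splitting of the $x$-integrals at $a$ and $b$.
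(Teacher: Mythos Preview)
Your approach is correct but takes a genuinely different route from the paper's. You apply the Weyl-type fractional operator of Proposition~\ref{int} directly to \eqref{7}, which collapses $J_\nu$ to $J_{\nu-\alpha}$ in one step and lets you read off $\psi_1(\rho)$ from \eqref{13}. The paper instead isolates the unweighted $C_1$-integral as $G(\rho)$ (equation \eqref{G}), uses the recurrence \eqref{a} to pass to $J_{\nu-1}$, substitutes \eqref{141} for $C_1$, and applies the Weber--Schafheitlin formula of Proposition~\ref{cor} to obtain an Abel-type relation $G(\rho)=\text{const}\cdot D_{q^2}\mathcal{K}_{q^2}^{1-\alpha}\big[(\cdot)^{(\alpha-\nu)/2}\psi_1(\sqrt{\cdot})\big]$; this is then inverted via the $\mathcal{K}_{q^2}$-semigroup machinery (the proposition on $D_{q,x}\mathcal{K}_q^{\alpha}\Phi$), which reproduces exactly the operator $\int_\rho^\infty x^{2\alpha-\nu-1}(\rho^2/x^2;q^2)_{\alpha-1}\,G(qx)\,d_qx$ you wrote down at the outset. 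From that point on (the paper's equations \eqref{gg}--\eqref{Psi}) the two arguments coincide: interchange, apply Proposition~\ref{int} to the remaining $J_\nu$-integrals, substitute \eqref{141}--\eqref{14}, and use $1-\tfrac{w}{1+w}=\tfrac{1}{1+w}$ to assemble the $C_2$-terms. Your route is shorter and avoids the Abel-inversion detour entirely; what the paper's approach buys is that the inversion step makes transparent why the scalar $q^{-2\alpha^2-\alpha+\nu}/(1-q)^2$ arises (it is the composition of the $q$-powers from the recurrence, from Proposition~\ref{cor}, and from the $\mathcal{K}_{q^2}$-inversion), and it sidesteps the boundary issue you flag---namely that integrating \eqref{7} from $q\rho$ may sample $\sigma=b\notin B_{q,b}$ when $\rho=b/q$---since in the paper the operator emerges after inversion with the shifted argument $G(qx)$ already built in.
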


 \begin{proof}
 Equation (\ref{7}) can be written in the following form:
 \begin{equation}\label{151}
  \int_0^\infty  u^{2\alpha}C_1(u) J_\nu (u\rho; q^2) \,d_q u= G(\rho), \quad \rho\in B_{q,b},
 \end{equation}
 where
 \begin{equation}\label{G}
 \begin{gathered}
 G(\rho)= f_3 (\rho) - \int_0^\infty u^{2\alpha} C_2(u)\dfrac{1}{1+w(u)} J_\nu (u\rho; q^2) \,d_q u \\ + \int_0^\infty  u^{2\alpha} C_1(u)\dfrac{w(u)}{1+w(u)}J_\nu (u\rho; q^2) \,d_q u.
 \end{gathered}
 \end{equation}
By using equations (\ref{a}) and (\ref{151}), we get
\begin{equation}\label{161}
G(\rho)= -(1-q) \rho^{\nu-1} q^{\nu-1} D_{q,\rho}\,\rho^{1-\nu} \int_0^\infty  u^{2\alpha-1}C_1(u) J_{\nu-1} (u\rho q^{-1}; q^2) \,d_q u.
 \end{equation}
 Substituting the value of $C_1(u)$ from  (\ref{141}) into  (\ref{161}), we obtain
\begin{equation}\label{17}
\begin{gathered}
 D_{q,\rho}\,\rho^{1-\nu} \int_0^\infty  u^{\alpha} \left[\int_0^b x \Phi_1(x)J_{\nu-\alpha}(ux; q^2)\,d_qx + \int_b^\infty x \Psi_1(x)J_{\nu-\alpha}(ux; q^2)\,d_qx \right] \times \\ J_{\nu-1} (u\rho q^{-1}; q^2)   \,d_q u =  - \frac{\rho^{1-\nu}q^{1-\nu} G(\rho)}{(1-q)} , \quad \rho\in B_{q,b}.
 \end{gathered}
 \end{equation}
From  (\ref{c}), there exists $M>0$ such that
\[\left| J_{\nu-\alpha}(ux;q^2)\right| \leq M(ux)^{\Re(\nu-\alpha)} \; \mbox{ for all } u,x \in \mathbb{R}_{q^2,b,+},\]
From Remark~\ref{rem1}, we have
 \[\begin{gathered}\left| \int_0^\infty  u^{\alpha} \left[\int_0^b x \Phi_1(x)J_{\nu-\alpha}(ux; q^2)\,d_qx + \int_b^\infty x \Psi_1(x)J_{\nu-\alpha}(ux; q^2)\,d_qx \right]J_{\nu-1} (u\rho q^{-1}; q^2) \,d_q u  \right| \\ \leq   M \Big[ \Big\| \Psi_1(x)\Big\|_{A_{q,b,\nu-\alpha}}+ \Big\| \Phi_1(x)\Big\|_{B_{q,b,\nu-\alpha}}\Big] \Big| \int_0^\infty u^{2\alpha+\nu}J_{\nu+1} (u\rho; q^2)\,d_qu \Big| <\infty .
 \end{gathered}
 \]
 Hence, the double $q$-integration is absolutely convergent and we can interchange the order of the $q$-integrations to obtain
\begin{equation}\label{181}
\begin{gathered}
G(\rho)= - (1-q)\rho^{\nu-1}q^{\nu -1} \left[\int_0^b x \Phi_1(x)\,d_qx + \int_b^\infty x \Psi_1(x)\,d_qx \right]\times \\ D_{q,\rho}\,\rho^{1-\nu} \int_0^\infty  u^{\alpha} J_{\nu-1} (u\rho q^{-1}; q^2) J_{\nu-\alpha}(ux; q^2) \,d_q u, \; \rho\in B_{q,b}.
\end{gathered}
\end{equation}
 Therefore, applying Proposition~\ref{cor} with $\Re(\nu-\alpha) > \Re(\nu-1) > -1$, we obtain
\begin{equation}\label{191}
G(\rho)= \frac{-(1-q)^2(1-q^2)^\alpha}{\Gamma_{q^2}(1-\alpha )} \rho^{\nu-1} D_{q,\rho} \int_{\rho}^\infty  x^{1-\nu-\alpha} \Psi_1(x)({\rho^2}/{x^2}; q^2)_{-\alpha} \,d_q x.
 \end{equation}
 By using
\begin{equation}\label{dq^2}
\int_x^\infty f(t) \,d_{q}t = \frac{1}{1+q} \int_{x^2}^\infty \frac{f(\sqrt{t})}{\sqrt{t}} \,d_{q^2}t,\;\;  D_{q,\rho}(f(\rho^2))= \rho (1+q) \left(D_{q^2}f\right)(\rho^2),
\end{equation}
 we obtain
\[\begin{split}G(\rho)&= \frac{-(1-q)^2(1-q^2)^\alpha }{\Gamma_{q^2}(1-\alpha )} \rho^{\nu} D_{q^2,\rho^2} \int_{\rho^2}^\infty  x^{\frac{-(\nu+\alpha)}{2}} \Psi_1(\sqrt{x})({\rho^2}/{x}; q^2)_{-\alpha} \,d_{q^2} x\\
&=-(1-q)^2 (1-q^2)^{\alpha}q^{\alpha^2-2\alpha-\nu}\rho^{\nu}\left(D_{q^2
} \mathcal{K}_{q^2}^{1-\alpha}\left((\cdot)^{-\frac{\nu+\alpha}{2}}\psi_1(\cdot)\right)\right)(\rho^2/q^2).
\end{split}
\]
Replacing $\rho$ by $q\rho$ yields
 \[-{q^{-\alpha^2+\alpha}(1-q^2)^{-\alpha}}{(1-q)^{-2}} \left[(\cdot)^{-\nu/2} G(q\sqrt{\cdot})\right](\rho^2)=  D_{q^2,\rho^2}\mathcal{K}_{q^2}^{1-\alpha}\left[(\sqrt{\cdot})^{(\alpha-\nu)}\psi_1(\sqrt{\cdot})\right](\rho^2).
 \]
Thus, applying Proposition \ref{prop.2} yields
\[\begin{gathered} \rho^{\alpha-\nu}\Psi_1(\rho) = q^{-\alpha^2}(1-q)^{-2}(1-q^2)^{-\alpha}\mathcal{K}_{q^2}^{\alpha}\left[(\cdot)^{-\nu/2}G(q\sqrt{\cdot})\right](\rho^2/q^2)\\
=\frac{q^{-2\alpha^2-\alpha+\nu}(1-q^2)^{-\alpha}(1-q)^{-2}}{\Gamma_{q^2}(\alpha)}\int_{\rho^2}^\infty  x^{-\frac{\nu}{2}+\alpha-1} G(q\sqrt{x})({\rho^2}/{ x}; q^2)_{\alpha-1 } \,d_{q^2} x.
\end{gathered}
\]
Using $\int_{x^2}^{\infty}f(t)\,d_{q^2}t=(1+q)\int_{x}^{\infty}tf(t^2)\,d_qt$, we obtain
 \[\rho^{\alpha-\nu}\Psi_1(\rho) =\frac{q^{-2\alpha^2-\alpha+\nu}(1-q^2)^{-\alpha}(1+q)}{(1-q)^{2}\Gamma_{q^2}(\alpha)}\int_{\rho}^\infty  x^{2\alpha-\nu-1} G(qx)({\rho^2}/{ x^2}; q^2)_{\alpha - 1} \,d_{q} x.\]
From (\ref{G}), we can write the last equation in the following form
\begin{equation}\label{gg}
\begin{gathered}
 \Psi_1(\rho)+ \rho^{\nu-\alpha}\, \frac{q^{-2\alpha^2-\alpha+\nu}(1-q^2)^{-\alpha}(1+q)}{(1-q)^{2}\Gamma_{q^2}(\alpha)}
  \int_\rho^{\infty} x^{2\alpha-\nu-1} (\rho^2/x^2; q^2)_{\alpha-1} \times \\ \Big[\int_0^\infty   \dfrac{u^{2\alpha}}{1+w(u)}C_2(u) J_\nu (qux; q^2) \,d_q u -  \int_0^\infty \dfrac{ w(u)}{1+w(u)}u^{2\alpha} C_1(u) J_\nu (qux; q^2) \,d_q u \Big] \,d_{q} x= \\ \rho^{\nu-\alpha}\,\frac{q^{-2\alpha^2-\alpha+
  \nu}(1-q^2)^{-\alpha}(1+q)}{(1-q)^2 \Gamma_{q^2}(\alpha )} \,  \int_{\rho}^\infty x^{2\alpha-\nu-1}\,f_3(qx) ({\rho^2}/{x^2}; q^2)_{\alpha-1} \,d_qx, \;\rho\in B_{q,b}.
 \end{gathered}
 \end{equation}
 From the condition on the function  $ C_2$, we can prove that the double $q$-integration
\begin{equation*}
 \int_\rho^{\infty} x^{2\alpha-\nu-1} (\rho^2/x^2; q^2)_{\alpha-1} \int_0^\infty   C_2(u)\dfrac{u^{2\alpha}}{1+w(u)} J_\nu (qux; q^2) \,d_q u\,d_qx
\end{equation*}
is absolutely convergent.  Therefore, we can interchange the order of the $q$-integrations and use Proposition \ref{int} to obtain
\begin{equation}\label{...}
\begin{gathered}
\Psi_1(\rho)+ \frac{q^{-2\alpha^2-\alpha+\nu}}{(1-q)^2}
\Big[\int_0^\infty   \dfrac{u^{\alpha}}{1+w(u)}C_2(u) J_{\nu-\alpha} (u\rho; q^2) \,d_q u - \\ \int_0^\infty \dfrac{u^{\alpha} w(u)}{1+w(u)} C_1(u) J_{\nu-\alpha} (u\rho; q^2) \,d_q u \Big] = \\ \rho^{\nu-\alpha}\,\frac{q^{-2\alpha^2-\alpha+\nu}(1-q^2)^{-\alpha}(1+q)}{(1-q)^2 \Gamma_{q^2}(\alpha )} \,  \int_{\rho}^\infty x^{2\alpha-\nu-1}\,f_3(qx) ({\rho^2}/{x^2}; q^2)_{\alpha-1} \,d_qx, \;\rho\in B_{q,b}.
 \end{gathered}
 \end{equation}
 Substitute the value of $C_1(u)$ and $C_2(u)$ from equations (\ref{14}) and (\ref{141}) into equation (\ref{...}), and then interchange the order of the $q$-integrations we get
\begin{equation}\label{Psi}
\begin{gathered}
 \Psi_1(\rho)+ \frac{q^{\nu-4\alpha}}{(1-q)^2}
 \Big[ \int_0^a x \psi_2(x) \int_{0}^{\infty} \dfrac{u}{1+w(u)}J_{\nu+\alpha}(ux;q^2) J_{\nu-\alpha}(u\rho ;q^2)\,d_qu\,d_qx \\- \int_b^\infty x\psi_1(x) \int_{0}^{\infty} \dfrac{u w(u)}{1+w(u)} J_{\nu-\alpha}(ux;q^2) J_{\nu-\alpha}(u\rho ;q^2)\,d_qu\,d_qx \Big] = F_1(\rho), \;\rho\in B_{q,b}.
 \end{gathered}
 \end{equation}
where
 \[\begin{gathered}F_1(\rho)=\rho^{\nu+\alpha}\, \frac{q^{\nu-4\alpha}(1+q)(1-q^2)^{-\alpha}}{(1-q)^2 \Gamma_{q^2}(\alpha )} \int_{\rho}^\infty x^{2\alpha-\nu-1}f_3(qx) (\rho^2/x^2; q^2)_{\alpha-1} \,d_qx -\\ \frac{q^{\nu-4\alpha}}{(1-q)^{2}}\Big[\int_a^\infty x\Phi_2(x) \int_0^\infty \dfrac{u}{1+w(u)} J_{\nu+\alpha}(ux;q^2)\ J_{\nu-\alpha}(u\rho;q^2)\,d_qu \,d_qx\\
  + \int_0^b x \Phi_1(x)\int_0^\infty \dfrac{uw(u)}{1+w(u)} J_{\nu-\alpha}(ux;q^2) J_{\nu-\alpha}(u\rho;q^2)\,d_qu \,d_qx\Big].
 \end{gathered}
 \]
 Equation (\ref{Psi}) is nothing else but the  Fredholm $q$-integral equation of the second kind~\eqref{Fred.1}. This completes the proof. \end{proof}

\begin{prop}\label{prop.3}
For $\rho\in A_{q,a}$, $\Psi_2(\rho)$ satisfies the Fredholm $q$-integral equation of the form
\begin{equation}\label{Fred.2}
\psi_2(\rho)=\tilde{F}_2(\rho)+\frac{1}{(1-q)^2}\int_{0}^{a} x K_2(\rho,x)\psi_2(x)\,d_qx,
\end{equation}
where
\[K_2(\rho,x)=
\int_0^\infty  \dfrac{uw(u)}{1+w(u)}  J_{\nu+\alpha}(ux; q^2) J_{\nu+\alpha} (u\rho; q^2)\,{d_qu},\]
\[\begin{gathered}\tilde{F}_2(\rho)=F_2(\rho)\, -\\
\frac{1}{(1-q)^2}\int_{b}^{\infty}x\Psi_1(x)\int_{0}^{\infty}\dfrac{u}{1+w(u)}
J_{\nu-\alpha}(ux;q^2)J_{\nu+\alpha}(u\rho;q^2)\,d_qu\,d_qx
\end{gathered}
\]
and
\begin{equation*}
\begin{gathered}
 F_2(\rho)= \frac{(1-q^2)^{-\alpha}(1+q) \rho^{\alpha-\nu-2}}{(1-q)^2\Gamma_{q^2}(\alpha )} \int_0^{\rho} ({q^2x^2}/{\rho^2}; q^2)_{\alpha-1} x^{\nu+1}f_1(x)\,d_qx \, + \\ \frac{1}{(1-q)^2} \int_a^\infty  x \Phi_2(x) \int_0^\infty  \dfrac{uw(u)}{1+w(u)}  J_{\nu+\alpha}(ux; q^2) J_{\nu+\alpha} (u\rho; q^2)\,{d_qu} \,{d_qx}\, -\\ \frac{1}{(1-q)^2} \int_0^b  x \Phi_1(x)\int_0^\infty \dfrac{u}{1+w(u)}J_{\nu-\alpha}(ux; q^2) J_{\nu+\alpha} (u\rho; q^2)\,{d_qu}\,{d_qx}.
 \end{gathered}
 \end{equation*}

\end{prop}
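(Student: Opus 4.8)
The plan is to mirror, \emph{mutatis mutandis}, the derivation of Proposition~\ref{prop.2}, exchanging the roles of the two subintervals and using the $A_{q,a}$-side pieces of Proposition~\ref{prop:1} rather than the $B_{q,b}$-side pieces. First I would rewrite equation~\eqref{4}, the first of the rewritten triple equations, as a single $q$-Hankel equation on $A_{q,a}$ for $C_2$: moving the $C_1$ contribution and the $w/(1+w)$ term to the right, \eqref{4} becomes $\int_0^\infty C_2(u)J_\nu(u\rho;q^2)\,d_qu = H(\rho)$ for $\rho\in A_{q,a}$, where $H(\rho)=f_1(\rho)-\int_0^\infty\frac{u^{2\alpha}w(u)}{1+w(u)}[C_1(u)+C_2(u)]J_\nu(u\rho;q^2)\,d_qu-\int_0^\infty C_1(u)J_\nu(u\rho;q^2)\,d_qu$; note the last integral equals $g_1(\rho)$ on $A_{q,b}\supseteq A_{q,a}$ by \eqref{5}, but it is cleaner to keep it in the $w$-free form. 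Next I would apply \eqref{b} to lower the Bessel index, just as \eqref{161} was obtained from \eqref{151} via \eqref{a}: this expresses $H(\rho)$ as $\frac{\rho^{-\nu}}{1-q}D_{q,\rho}^{-1}$-type relation — concretely $H(\rho)=(1-q)^{-1}\rho^{-\nu}D_{q,\rho}\big[\rho^{\nu+1}\cdot(\text{something})\big]$ is the wrong direction, so instead I integrate: multiply both sides by $\rho^{\nu+1}(q^2\rho^2/x^2;q^2)_{-\alpha}x^{-2\alpha}$ and $q$-integrate $\rho$ from $0$ to $x$ for $x\in A_{q,a}$, interchange the order of $q$-integration (absolute convergence following from \eqref{c} and the hypothesis $C_2\in L_{q,-t}$), apply \eqref{8} with $\gamma=-\alpha$ together with \eqref{20}, and differentiate in $x$ to land on $\int_0^\infty u^\alpha C_2(u)J_{\nu+\alpha}(ux;q^2)\,d_qu=(\text{fractional integral of }H)$, i.e. on $\Psi_2(x)$ for $x\in A_{q,a}$ by the definition \eqref{131} and \eqref{eq:2}.

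Having obtained $\Psi_2(\rho)=\text{(a }D_{q^2}\text{-}\mathcal{K}_{q^2}\text{ fractional operator applied to }t^{\nu/2}H(\sqrt t))$, I would run the same change-of-variables bookkeeping as in the proof of Proposition~\ref{prop.2}: use \eqref{dq^2} to pass between $d_q$ and $d_{q^2}$, rewrite the operator as $\mathcal{K}_{q^2}^{\alpha}$ acting on a shifted argument, invoke the semigroup law \eqref{prop.k} (equivalently apply $\mathcal{K}_{q^2}^{1-\alpha}$ to undo a $D_{q^2}\mathcal{K}_{q^2}^{1-\alpha}$, exactly as the inversion Proposition preceding Section~3 licenses), and then convert back to $\rho$-variable integrals using $\int_0^{x^2}f(t)\,d_{q^2}t=(1+q)\int_0^x tf(t^2)\,d_qt$ and Proposition~\ref{prop.1}, equation~\eqref{8}. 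Along the way the $f_1$-term produces the explicit kernel integral $\int_0^\rho (q^2x^2/\rho^2;q^2)_{\alpha-1}x^{\nu+1}f_1(x)\,d_qx$ with the prefactor $(1-q^2)^{-\alpha}(1+q)\rho^{\alpha-\nu-2}/((1-q)^2\Gamma_{q^2}(\alpha))$ displayed in $F_2$; the $C_1$-term, after substituting \eqref{141} and interchanging $q$-integrations, yields the $\int_0^b x\Phi_1(x)\cdots$ and $\int_b^\infty x\Psi_1(x)\cdots$ pieces; and the $w/(1+w)$ term, after substituting \eqref{14}, splits into the $\int_0^a x\Psi_2(x)K_2(\rho,x)\,d_qx$ self-interaction (which becomes the Fredholm integral term of \eqref{Fred.2}) plus the $\int_a^\infty x\Phi_2(x)\cdots$ contribution folded into $F_2$. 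Collecting the $\Psi_1$-dependent piece into $\tilde F_2$ and the remaining known data into $F_2$ gives exactly \eqref{Fred.2}.

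For each interchange of the order of $q$-integration I would cite the estimate~\eqref{c} in the form already used in the proof of Proposition~\ref{prop:1} — namely that $u^tJ_\nu(u;q^2)$ is bounded on $\mathbb{R}_{q,+}$ when $\Re(t+\nu)>-1$ — together with Remark~\ref{rem1}, which controls $x^{-\nu-\alpha}\phi_2$, $x^{-\nu-\alpha}\psi_2$, $x^{-\nu+\alpha}\phi_1$, $x^{-\nu+\alpha}\psi_1$ on their respective intervals, and the hypotheses $C_1\in L_{q,\nu}$, $C_2\in L_{q,-t}$ with $\Re\nu+2>\Re t>-\Re\nu+2\Re(1-\alpha)$.

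The main obstacle I anticipate is not any single inequality but the combinatorics of the change of variables: keeping track of the powers of $q$ (the $q^{-2\alpha^2-\alpha+\nu}$-type factors in Proposition~\ref{prop.2} versus whatever analogue appears here), of the exponents $\rho^{\alpha-\nu-2}$ versus $\rho^{\nu+\alpha}$, and of the direction of the fractional operators (here $C_2$ lives on the \emph{finite} interval $A_{q,a}$ and uses \eqref{8}/\eqref{20}, the $I^\alpha$-type Riemann--Liouville side, whereas in Proposition~\ref{prop.2} $C_1$ used the $\mathcal{K}$-type Erd\'elyi--Kober side on $B_{q,b}$). One has to be careful that $H(\rho)$ is being acted on by the correct operator $D_{q^2,x}^{\alpha}(t^{\nu/2}\,\cdot\,)$ as in the second line of \eqref{phi1}, and that the inversion step really is legitimate — i.e. that the boundary term in \eqref{Abel}, or the corresponding term in the inversion Proposition for $\mathcal{K}_q^\alpha$, vanishes under the stated integrability hypotheses, which is where $C_2\in L_{q,-t}$ with the precise two-sided bound on $\Re t$ is used. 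Once the operator identities are lined up correctly, the remainder is the same sequence of interchanges-and-substitutions that produced \eqref{Psi} from \eqref{gg}.
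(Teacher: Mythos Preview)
Your overall strategy --- mirror the proof of Proposition~\ref{prop.2} with the roles of $C_1,C_2$ and of the intervals $A_{q,a},B_{q,b}$ exchanged --- is exactly what the paper does; its proof reads, in full, ``The proof is similar to the proof of Proposition~\ref{prop.2} and is omitted.''

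That said, two concrete steps in your outline would fail as written. First, the starting equation is mis-set. Every integrand in~\eqref{4} carries the factor $u^{2\alpha}$, so isolating the $C_2$ piece gives
\[
\int_0^\infty u^{2\alpha}C_2(u)J_\nu(u\rho;q^2)\,d_qu=H(\rho),\qquad \rho\in A_{q,a},
\]
with $H(\rho)=f_1(\rho)-\int_0^\infty \frac{u^{2\alpha}}{1+w(u)}C_1(u)J_\nu(u\rho;q^2)\,d_qu+\int_0^\infty \frac{u^{2\alpha}w(u)}{1+w(u)}C_2(u)J_\nu(u\rho;q^2)\,d_qu$, the exact analogue of $G$ in~\eqref{G}. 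Your version drops $u^{2\alpha}$ on the left and on the bare $C_1$-integral and flips the sign of the $w/(1+w)$ term; the remark that ``the last integral equals $g_1$ by~\eqref{5}'' shows this is not a typo but a genuine mis-reading of~\eqref{4}. Second, the fractional kernel you choose points the wrong way: multiplying by $\rho^{\nu+1}(q^2\rho^2/x^2;q^2)_{-\alpha}$, integrating over $(0,x)$ and differentiating is precisely the operation in~\eqref{20}, which produces $u^\alpha J_{\nu-\alpha}(ux;q^2)$, not $J_{\nu+\alpha}$. To land on $\Psi_2(x)=\int_0^\infty u^\alpha C_2(u)J_{\nu+\alpha}(ux;q^2)\,d_qu$ you must instead take $\gamma=\alpha-1$ in~\eqref{8}: multiplying the corrected equation by $\rho^{\nu+1}(q^2\rho^2/x^2;q^2)_{\alpha-1}$ and integrating from $0$ to $x$ converts $u^{2\alpha}J_\nu$ directly into $u^{\alpha}J_{\nu+\alpha}$ (no $D_{q,x}$ needed), and the $f_1$-contribution immediately yields the first line of $F_2(\rho)$. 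Equivalently, one may follow the literal analogue of~\eqref{161}--\eqref{191}: use~\eqref{b} with index $\nu+1$ in place of~\eqref{a}, substitute~\eqref{14}, apply Proposition~\ref{cor} with the pair $(\nu+\alpha,\nu+1)$, and invert via the Riemann--Liouville side~\eqref{Abel} rather than the $\mathcal{K}_{q^2}$ side. With these two corrections in place, the remainder of your outline --- substitute~\eqref{141} and~\eqref{14}, interchange using~\eqref{c} and Remark~\ref{rem1}, and sort the pieces into $F_2$, $\tilde F_2$ and the $K_2$-kernel --- goes through exactly as you describe.
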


\medskip

\begin{proof}The proof is similar to the proof of Proposition~\ref{prop.2} and is omitted.

 \end{proof}

\begin{thm}\label{Thm:1}
The solution of \eqref{F:1}--\eqref{F:2} is given by
\[\Psi (u) =\dfrac{u^{2\alpha}}{1+w(u)}\left(C_1(u)+C_2(u)\right).\]
The functions $C_1$, $C_2$, $\phi_1$ and $\phi_2$ are given by Proposition \ref{prop:1}, and $\psi_1$, $\psi_2$ satisfies the Fredholm $q$-integral equations \eqref{Fred.2} and \eqref{Fred.1} of second kind.
\end{thm}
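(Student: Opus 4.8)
The plan is to assemble the pieces proved in Propositions~\ref{prop:1}, \ref{prop.2} and \ref{prop.3}, showing that they are not merely necessary consequences of a solution but together constitute one. First I would observe that the substitution $C(u):=u^{-2\alpha}\psi(u)[1+w(u)]$ and the splitting $C=C_1+C_2$ (with the matching condition $f_1=g_1+g_2$ on the overlap $A_{q,b}\cap B_{q,a}$) turn the original system \eqref{F:1}--\eqref{F:3} into the equivalent system \eqref{4}--\eqref{7}. Since $w$ is nonnegative and bounded, the map $\psi\mapsto C$ is a bijection between the relevant weighted $L_q$ spaces; inverting it gives precisely $\Psi(u)=u^{2\alpha}(C_1(u)+C_2(u))/(1+w(u))$, which is the claimed solution formula. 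So the theorem reduces to exhibiting $C_1,C_2$ solving \eqref{5}, \eqref{6}, \eqref{4} and \eqref{7} simultaneously.

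Next I would run the reduction in the forward direction. Proposition~\ref{prop:1} shows that if $C_1,C_2$ satisfy \eqref{5}--\eqref{6} then $C_1,C_2$ are recovered from the $q$-Hankel transforms \eqref{141}, \eqref{14}, where the ``inner'' pieces $\Phi_1$ (on $A_{q,b}$) and $\Phi_2$ (on $B_{q,a}$) are the \emph{known} functions given by \eqref{phi1}, \eqref{phi2} (expressed through $g_1,g_2$, hence through $f_1$), while the ``outer'' pieces $\psi_1$ (on $B_{q,b}$) and $\psi_2$ (on $A_{q,a}$) are, at this stage, \emph{unknown}. Substituting these representations into the remaining two equations \eqref{7} and \eqref{4} — the ones carrying the data $f_3$ and $f_1$ respectively — and using the fractional-operator identities \eqref{a}, Proposition~\ref{cor}, Proposition~\ref{int} and the $q^2$-rescaling \eqref{dq^2}, exactly as carried out in Propositions~\ref{prop.2} and \ref{prop.3}, converts \eqref{7} into the Fredholm equation \eqref{Fred.1} for $\psi_1$ (with kernel $K_1$ and free term $\tilde F_1$ depending on $\psi_2$) and \eqref{4} into the Fredholm equation \eqref{Fred.2} for $\psi_2$ (kernel $K_2$, free term $\tilde F_2$ depending on $\psi_1$). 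Thus the full triple system is equivalent to the coupled pair \eqref{Fred.1}--\eqref{Fred.2}.

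Finally I would close the loop. Given any solution pair $(\psi_1,\psi_2)$ of \eqref{Fred.1}--\eqref{Fred.2}, define $C_1,C_2$ by \eqref{141}, \eqref{14}; the integrability hypotheses on $C_1,C_2$ in Proposition~\ref{prop:1} are guaranteed by the boundedness facts in Remark~\ref{rem1} together with the estimate \eqref{c}, so all interchanges of $q$-integration performed above are legitimate. By the inversion pair \eqref{HT}, equations \eqref{5} and \eqref{6} hold with the stated $g_1,g_2$, and hence \eqref{4}, \eqref{7} hold by construction (they are literally \eqref{Fred.2}, \eqref{Fred.1} read backwards through the operator identities, which are reversible since $0<\Re\alpha<1$ lets us apply the Abel-type inversion \eqref{Abel} / Proposition between $\mathcal{K}_{q^2}^{\alpha}$ and $\mathcal{K}_{q^2}^{1-\alpha}$). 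Then $C:=C_1+C_2$ solves \eqref{1}--\eqref{3}, and $\Psi(u)=u^{2\alpha}(C_1(u)+C_2(u))/(1+w(u))$ solves \eqref{F:1}--\eqref{F:3}. The main obstacle is bookkeeping rather than conceptual: one must check at each stage that the weighted integrability conditions propagate correctly through the $q$-Hankel transform and the fractional operators so that Fubini-type interchanges are valid, and that the operator identities used in Propositions~\ref{prop.2}--\ref{prop.3} are genuinely invertible on the function classes at hand — that is where Remark~\ref{rem1}, \eqref{c} and \eqref{Abel} must be invoked with care.
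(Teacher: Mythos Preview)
Your proposal is correct and follows the paper's approach: the paper offers no separate proof of the theorem, treating it as the summary of Propositions~\ref{prop:1}, \ref{prop.2} and \ref{prop.3}, and your write-up is precisely that assembly. In fact you are more explicit than the paper about the converse direction (that a solution of the coupled Fredholm pair feeds back to a solution of \eqref{F:1}--\eqref{F:3}) and about the integrability bookkeeping, which the paper leaves implicit.
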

\vskip 0.5 cm

\noindent{\bf Example 1}

\noindent {\bf 1.} Take $b=aq^{-m}$ and assume that $m\to\infty$. If we assume that $f_1=f$, $f_2=f$, and $w=0$. Then  the system \eqref{F:1}--\eqref{F:3} is reduced to
the dual $q$-integral equations
\begin{eqnarray}
\int_{0}^{\infty}\psi(u)J_{\nu}(u\rho;q^2)\,d_qu=f(\rho),\quad \rho\in A_{q,a}\\
\int_{0}^{\infty}u^{-2\alpha}\psi(u)J_{\nu}(u\rho;q^2)\,d_qu=0,\quad \rho\in B_{q,a}.
\end{eqnarray}
 Hence, from Theorem~\ref{Thm:1}
 \[\psi(u)=u^{1+\alpha}\int_{0}^{\infty}x\psi_2(x)J_{\nu+\alpha}(ux;q^2)\,d_qx,\;u\in\mathbb{R}_{q,+}\]
 \[\begin{split}\psi_2(\rho)&=\dfrac{(1-q^2)^{-\alpha}(1+q)\rho^{\alpha-\nu-2}}{(1-q)^2\Gamma_{q^2}(\alpha)}\int_{0}^{\rho}(q^2 x^2/\rho^2;q^2)_{\alpha-1}x^{\nu+1}f(x)\,d_qx\\
 &=\rho^{-\alpha-\nu}\dfrac{(1-q^2)^{-\alpha}}{(1-q)^2}I_{q^2}^{\alpha}\left(t^{\nu/2}f(\sqrt{t})\right)(\rho^2).
 \end{split}\]
 Hence,
 \[\psi(u)=u^{1+\alpha}\dfrac{(1-q^2)^{-\alpha}}{(1-q^2)}\int_{0}^{\infty}x^{1-\alpha-\nu}I_{q^2}^{\alpha}\left(t^{\nu/2}f(\sqrt{t})\right)(x^2)J_{\nu+
 \alpha}(ux;q^2)\,d_qx.\]
This coincides with the result in~\cite[Theorem 4.1]{O.A} for solutions of double $q$-integral equations.
\vskip 0.5 cm
Let $a=q^m$ and assume that $m\to\infty$. If we assume that $f_2=0$, and  $f_3=f$, we obtain the dual $q$-integral system of equations
\begin{eqnarray}
\int_{0}^{\infty}u^{-2\alpha}\psi(u)J_{\nu}(u\rho;q^2)\,d_qu=0,\quad \rho\in A_{q,b}\\
\int_{0}^{\infty}\psi(u)J_{\nu}(u\rho;q^2)\,d_qu=f,\quad \rho\in B_{q,b}.
\end{eqnarray}
 Hence, from Theorem~\ref{Thm:1}
 \[\psi(u)=u^{1+\alpha}\int_{b}^{\infty}x\psi_1(x)J_{\nu-\alpha}(ux;q^2)\,d_qx,\;u\in \mathbb{R}_{q,+},\]
 \[\psi_1(\rho)=-\dfrac{(1-q^2)^{-\alpha}q^{-2\alpha}\rho^{\alpha+\nu}}{(1-q)^2\Gamma_{q^2}(\alpha)}\int_{\rho}^{\infty}(\rho^2/ x^2;q^2)_{\alpha-1}x^{2\alpha-\nu-1}f(x)\,d_qx.\]
This is a special case of~\cite[Theorem 5.1]{O.A}.
\vskip 0.5 cm

\noindent{\bf Example 2}

We consider the triple $q$-integral equations
\begin{eqnarray}
\int_{0}^{\infty}\psi(u)J_{0}(u\rho;q^2)\,d_qu&=&0,\quad \rho\in A_{q,a},\\
\int_{0}^{\infty}u^{-1}\psi(u)J_{0}(u\rho;q^2)\,d_qu&=&1,\quad \rho\in A_{q,b}\cap B_{q,a},\\
\int_{0}^{\infty}\psi(u)J_{0}(u\rho;q^2)\,d_qu&=&0,\quad \rho\in B_{q,b}.
\end{eqnarray}
Hence, we have $\nu=0$, $g_1=1$, $g_2=0$, $f_1=f_3=0$, $w=0$, and $\alpha=\frac{1}{2}$.

From Theorem \ref{Thm:1},
\[\psi(u)=u\left(C_1(u)+C_2(u)\right),\]
where
\[C_1(u)=\frac{(1-q)(1-q^2)}{\Gamma_{q^2}^2({1}/{2})}\dfrac{\sin(\frac{bu}{1-q};q)}{u}+\frac{\sqrt{1-q^2}}{\Gamma_{q^2}(1/2)}\int_{b}^{\infty}\sqrt{x}\psi_1(x)\cos(\frac{xu\sqrt{q}}{1-q};q^2)\,d_qx,\]
\[C_2(u)=\frac{\sqrt{1-q^2}}{\Gamma_{q^2}(1/2)}\int_{0}^{a}\sqrt{x}\psi_2(x)\sin(\frac{xu}{1-q};q^2)\,d_qx,\]
\be \label{Eq:1-2}\psi_1(\rho)=\frac{\sqrt{\rho}(1+q)}{q(1-q)\Gamma_{q^2}^2(1/2)}\int_{0}^{a}x^{3/2}\dfrac{\psi_2(x)}{q\rho^2-x^2}\,d_qx,\; \rho\in B_{q,b},\ee
\be\begin{gathered} \label{Eq:2-2}\psi_2(\rho)=-\dfrac{(1+q)\sqrt{\rho}}{(1-q)\Gamma_{q^2}^2(1/2)}\int_{b}^{\infty}\dfrac{\sqrt{x}\psi_1(x)}{qx^2-\rho^2}\,d_qx\\+
\dfrac{(1+q)^{3/2}}{\sqrt{1-q}\Gamma_{q^2}^{3}(1/2)}\sqrt{\rho}\int_{\rho/q}^{b}\dfrac{d_qx}{qx^2-\rho^2}.
\end{gathered}
\ee
We used \cite[PP. 455-466]{H.T} or \cite[Proposition 2.4 ]{O.A} to calculate $\psi_1$ and $\psi_2$ in equations \eqref{Eq:1-2} and \eqref{Eq:2-2}, respectively. Substituting from \eqref{Eq:1} into \eqref{Eq:2}, we obtain the second order Fredholm $q$-integral equation
\be
\psi_2(\rho)=-\frac{q^{-1}\sqrt{\rho}(1+q)}{(1-q)^2\Gamma_{q^2}^3(1/2)}\int_{0}^{a}t^{3/2}\psi_2(t)K_2(\rho,t)\,d_qt+
\dfrac{(1+q)^{3/2}}{\sqrt{1-q}\Gamma_{q^2}^{3}(1/2)}\sqrt{\rho}\int_{\rho/q}^{b}\dfrac{d_qx}{qx^2-\rho^2}.,
\ee
where $\rho\in A_{q,a}$ and
\[K(\rho,t)=\int_{b}^{\infty}\frac{x}{(t^2-qx^2)(\rho^2- q x^2)}\,d_qt.\]
% ----------------------------------------------------------------
\section{{\bf Solving  system of triple $q^2$-Integral Equations by using solutions of dual $q$-integral equations}}

In~\cite{J.C.1}, Cooke solved  certain triple integral equations involving Bessel functions by using a result for Noble~\cite{Nob1} for solutions for dual integral equations with Bessel functions as kernel.  In this section, we use  the result, Theorem A , which  introduced in~\cite{O.A} to solve the following triple q-integral equations:
\begin{equation}\label{E:1}
\xi^{-\gamma} \int_0^\infty  \rho^{-\gamma} \psi (\rho) J_\kappa (\sqrt{\rho \xi}; q^2) \,d_{q^2}\rho= f (\xi), \quad   \xi\in A_{q^2}
 \end{equation}
 \begin{equation}\label{E:2}
\xi^{-\alpha} \int_0^\infty  \rho^{-\alpha} \psi (\rho) J_\mu (\sqrt{\rho \xi}; q^2) \,d_{q^2}\rho= g (\xi), \quad   \xi\in A_{q^2}\cap B_{q^2}
  \end{equation}
  \begin{equation}\label{E:3}
\xi^{-\beta} \int_0^\infty  \rho^{-\beta} \psi (\rho) J_\nu (\sqrt{\rho \xi}; q^2) \,d_{q^2}\rho= h (\xi), \quad   \xi\in B_{q^2}
\end{equation}
where  $a$, $\alpha$, $\beta$, $\gamma$, $\mu$, $\nu$ and $\kappa$  are  complex numbers such that \[\Re(\nu) > -1,\;\; \Re(\mu)>-1,\;\; \Re(\kappa)>-1,\;\mbox{and}\; 0<a<1,\] , the functions
$f (\rho)$, $g (\rho)$ and $h (\rho)$ are known functions, and  $\psi (u)$ is the solution function  to be determined.

The following is a result from \cite{O.A} that we shall use to solve the system \eqref{E:1}--\eqref{E:3}.

\vskip .5 cm
\noindent{\bf Theorem A.}
{\it Let $\alpha$, $\beta$, $\mu$ and $\nu$ be complex numbers and let $\lambda:=\frac{1}{2}(\mu+\nu)-(\alpha-\beta)> -1$. Assume that $$ \Re(\nu)> -1,\, \Re(\mu)> -1,\, \Re(\lambda)> -1,\; \mbox { and } \Re(\lambda-\mu-2\alpha)>0.$$ Let $f\in L_{q^2,\frac{\mu}{2}+\alpha }(A_{q^2})$ and $g\in L_{q^2,\frac{-\mu}{2}+\alpha -1 }(B_{q^2})$. Then the dual $q^2$-integral equations
\begin{equation}\label{dual.1}
 \xi^{-\alpha} \int_0^\infty  \rho^{-\alpha} \psi (\rho) J_\mu (\sqrt{\rho \xi}; q^2) \,d_{q^2}\rho= f(\xi), \quad   \xi\in A_{q^2},
 \end{equation}
 \begin{equation}\label{dual.2}
 \xi^{-\beta} \int_0^\infty  \rho^{-\beta} \psi (\rho) J_\nu (\sqrt{\rho \xi}; q^2) \,d_{q^2}\rho= g(\xi), \quad  \xi\in B_{q^2}
\end{equation}
 has the solution of the form
\[
\begin{gathered}
\psi(\xi)= (1-q^2)^{\lambda-\nu+2\alpha-2}\xi^{\lambda/2-\mu/2+\alpha} \int_0^1 J_{\lambda} (\sqrt{\rho \xi}; q^2) I^{\mu/2+\alpha,\lambda-\mu}_{q^2}f(\rho)\,d_{q^2}\rho \\ + (1-q^2)^{\lambda-\nu-2}\xi^{\lambda/2-\mu/2+\alpha} \int_1^\infty J_{\lambda} (\sqrt{\rho \xi}; q^2) \mathcal{K}^{\lambda/2-\nu/2-\beta,\nu-\lambda}_{q^2}g(\rho)\,d_{q^2}\rho,
\end{gathered}
\]
in $L_{q^2,\frac{\mu}{2}-\alpha }(\mathbb{R}_{q^2,+})\cap L_{q^2,\frac{\nu}{2}-\beta }(\mathbb{R}_{q^2,+})\cap L_{q^2,\frac{\nu}{2}-\beta-\gamma }(\mathbb{R}_{q^2,+}) $, for $\gamma$ satisfying
\[ 1+\Re(\nu)>\Re(\gamma)> \mbox{ max } \{0, \Re(\nu-\lambda)\}.\]
}
\vskip 0.2 cm

Now, we shall solve the system of triple $q^2$-integral equations \eqref{E:1}--\eqref{E:3}.  Since the function $g(\rho)$ is only defined in $A_{q^2}\cap B_{q^2}$, we can write $$ g(\xi)=g_1(\xi)+g_2(\xi),$$ $g_1$ and $g_2$ defined in $A_{q^2}$ and $B_{q^2}$ respectively. So, we may assume that $$ \psi= A_1+A_2, $$ and we solve the equations in the form

\begin{equation}\label{I_1}
 \xi^{-\gamma} \int_0^\infty  \rho^{-\gamma} [A_1(\rho)+A_2(\rho)] J_\kappa (\sqrt{\rho \xi}; q^2) \,d_{q^2}\rho= f (\xi), \quad \xi\in A_{q^2},
 \end{equation}

 \begin{equation}\label{I_2}
 \xi^{-\alpha} \int_0^\infty  \rho^{-\alpha} A_1(\rho) J_\mu (\sqrt{\rho \xi}; q^2) \,d_{q^2}\rho = g_1 (\xi), \quad   \xi\in A_{q^2},
  \end{equation}

 \begin{equation}\label{I_3}
 \xi^{-\alpha} \int_0^\infty  \rho^{-\alpha} A_2(\rho) J_\mu (\sqrt{\rho \xi}; q^2) \,d_{q^2}\rho = g_2(\xi), \quad   \xi\in B_{q^2},
  \end{equation}

\begin{equation}\label{I_4}
\xi^{-\beta} \int_0^\infty  \rho^{-\beta} [A_1(\rho)+A_2(\rho)] J_\nu (\sqrt{\rho \xi}; q^2) \,d_{q^2}\rho= h (\xi), \quad  \xi\in B_{q^2},
 \end{equation}
 We rewrite the equations as two pairs of dual $q$-integral equations, namely
\begin{equation}\label{I_5}
 \left\{
\begin{array}{ll}
 \displaystyle
  \xi^{-\alpha} \int_0^\infty  \rho^{-\alpha} A_1(\rho) J_\mu (\sqrt{\rho \xi}; q^2) \,d_{q^2}\rho = g_1 (\xi), \quad   \xi\in A_{q^2}, \cr\cr
 \displaystyle \xi^{-\beta} \int_0^\infty  \rho^{-\beta} A_1(\rho) J_\nu (\sqrt{\rho \xi}; q^2) \,d_{q^2}\rho  = h (\xi)- f_2(\xi)  , \quad   \xi\in B_{q^2}
\end{array}
         \right.
\end{equation}

\begin{equation}\label{I_6}
 \left\{
\begin{array}{ll}
 \displaystyle
  \xi^{-\alpha} \int_0^\infty  \rho^{-\alpha} A_2(\rho) J_\mu (\sqrt{\rho \xi}; q^2) d_{q^2}\rho = g_2(\xi), \quad   \xi\in B_{q^2},    \cr\cr
 \displaystyle \xi^{-\gamma} \int_0^\infty  \rho^{-\gamma} A_2(\rho) J_\kappa (\sqrt{\rho \xi}; q^2) \,d_{q^2}\rho= f (\xi)- f_1(\xi), \quad   \xi\in A_{q^2},
\end{array}
         \right.
\end{equation}
 where $$ \xi^{-\gamma} \int_0^\infty  \rho^{-\gamma} A_1(\rho) J_\kappa (\sqrt{\rho \xi}; q^2) d_{q^2}\rho  =f_1(\xi), \quad\quad \xi\in A_{q^2},$$
$$  \xi^{-\beta} \int_0^\infty  \rho^{-\beta} A_2(\rho) J_\nu (\sqrt{\rho \xi}; q^2) \,d_{q^2}\rho   =f_2(\xi), \quad\quad \xi\in B_{q^2},$$
 Then we can solve the first and second pairs by Theorem \ref{dual theorem}. For the first pairs
\begin{equation*}
A_1(\xi)= (1-q^2)^{\lambda-\nu+2\alpha-2}\xi^{\lambda/2-\mu/2+\alpha}  \int_0^1 J_{\lambda} (\sqrt{\rho \xi}; q^2) I^{\mu/2+\alpha,\lambda-\mu}_{q^2} g_1(\rho)\,d_{q^2}\rho $$$$ +  (1-q^2)^{\lambda-\nu-2}\xi^{\lambda/2-\mu/2+\alpha} \int_1^\infty J_{\lambda} (\sqrt{\rho \xi}; q^2) \mathcal{K}^{\lambda/2-\nu/2-\beta,\nu-\lambda}_{q^2}[h(\rho)-f_2(\rho)]\,d_{q^2}\rho,
\end{equation*}
where, $\lambda:=\frac{1}{2}(\mu+\nu)-(\alpha-\beta)> -1$.
 The solution of the second pair has the form
\begin{equation*}
A_2(\xi)= (1-q^2)^{\lambda-\mu+2\gamma-2}\xi^{\lambda/2-\kappa/2+\gamma}  \int_0^a J_{\lambda} (\sqrt{\rho \xi}; q^2) I^{\kappa/2+\gamma,\lambda-\kappa}_{q^2} [f(\rho)-f_1(\rho)] \,d_{q^2}\rho $$$$ +  (1-q^2)^{\lambda-\mu-2}\xi^{\lambda/2-\kappa/2+\gamma} \int_a^\infty J_{\lambda} (\sqrt{\rho \xi}; q^2) \mathcal{K}^{\lambda/2-\mu/2-\alpha,\mu-\lambda}_{q^2}  g_2(\rho)d_{q^2}\rho,
\end{equation*}
where, $\lambda:=\frac{1}{2}(\mu+\kappa)-(\gamma-\alpha)> -1$.

% ----------------------------------------------------------------
\bibliographystyle{plain}

\begin{thebibliography}{10}

\bibitem{R.P} R.P. Agarwal, Certain fractional q-integrals and q-derivatives. \emph{Proc. Camb. Phil. Soc.}, \textbf{66}(1969) 365-370 .

\bibitem{Agarwal} O.P. Agrawal,  Some generalized fractional calculus operators and their applications in integral equations. \emph{Fract. Calc. Appl. Anal.},\textbf{ 15}(2) (2012) 700�711.

\bibitem{W.A} W.A. Al-Salam, Some fractional q-integrals and q-derivatives. \emph{Proc. Edinb. Math. Soc.}, \textbf{2}(15)(1966/1967) 135-140 .


\bibitem{G.E} G.E. Andrews, R. Asky, and R. Roy, \emph{Special Functions}. Cambridge University Press, Cambridge, (1999).

\bibitem{M.H} M.H. Annaby, and Z.S. Mansour, $q$-\emph{Fractional Calculus and Equations}. Springer, 2056, (2012).


\bibitem{O.A} O.A. Ashour, M.E.H. Ismail, and Z.S. Mansour, {On Certain Dual q-Integral Equations}, to appera in Pacific Journal of Mathematics ,\textbf{275}(1) (2015)63--102.


\bibitem{AMZ}O.A. Ashour, M.E.H. Ismail, and Z.S. Mansour,{ Dual and triple sequences involving $q$-orthognal polynomials}, submitted

\bibitem{R} R. Askey, The $q$-gamma and $q$-beta functions. \emph{Appl. Anal.}, \textbf{8}(2) (1978/1979) 125--141.


\bibitem{J.C.4}J.C. Cooke, The solution of triple and quadruple integral equations and Fourier-Bessel series. \emph{Quart. Journ. Mech. and Applied Math.}, \textbf{25} (1972)247--262.

\bibitem{J.C.3}J.C. Cooke, The solution of triple integral equations in operational form. Quart. \emph{Journ. Mech. and Applied Math.}, \textbf{18}(1965)57--72.

\bibitem{J.C.1} J.C. Cooke, Some further triple integral equation solutions. \emph{Proc. Edinburgh Math. Soc.}, (II)\textbf{13} (1963) 303--316.

\bibitem{J.C.2}J.C. Cooke, Triple integral equations. \emph{Quart. Journ. Mech. and Applied Math.}, \textbf{16}(1963) 193--203.

\bibitem{Cop} E.T.~Copson, {On certain dual integral equations}. { \emph  Proc. Glasgow Math. Assoc.  } {\bf 5}, no.1, 21--24 (1961).

\bibitem{A.P} A.P. Dwivedi, J. Chandel and P. Bajpai, Triple integral equations involving inverse finite Mellin transforms. \emph{Ganita}, \textbf{52} (2001) 157--160.


\bibitem{ES} A. Erd\'{e}lyi and I.N. Sneddon, Fractional integration and dual integral equations.\emph{ Cand. J. Math,}  \textbf{14}, no.4, (1962)85-693.

\bibitem{AE} A. Erd\'{e}lyi, On some functional transformations. \emph{Rend. Sem. Mat. Univ. Politec. Torino,}  \textbf{10}(1951) 217--234 .


\bibitem{G.M} G. Gasper and M. Rahman, \emph{Basic Hypergeometric Series}. Cambridge university Press, Cambrdge, (2004).

\bibitem{M.P} M.E. Ismail, \emph{Classical and Quantum Orthogonal Polynomials in One Variable}. Encyclopedia Math. Appl. 98, Cambridge University Press, 2005.


\bibitem{F.H} F.H. Jackson, The basic gamma function and elliptic functions. \emph{ Proc. Roy. Soc. A},  \textbf{76} (1905)127-144.

 \bibitem{F.H 2} F.H. Jackson, On $q$-definite integrals. \emph{Quart. J. Pure and Appl. Math}, \textbf{41}(1910)193-203.

\bibitem{Kes} R.N.~Kesarwani,   \newblock{Fractional integration and certain dual integral equations}.{ \em Mathematische Zeitschrift}, {\bf 98}(1967)83--88.


\bibitem{H.T} H.T. Koornwinder and R.F. Swarttouw, On a $q$-analoge of Fourier and Hankel transforms. \emph{Trans. Amer. Math. Soc},
 \textbf{333}, 445-461 (1992).

 \bibitem{E.R} E.R. Love and D.L. Clements, A transformation of Cooke's treatment of some triple integral equations. \emph{ J. Australian Math. Soc.}, \textbf{16}(3)(1976) 259--288.

\bibitem{Love}
E.R.~Love. \newblock{Dual integral equations}. \newblock{ \em  Can. J. Math },{\bf 15} (1963) 631--640.

\bibitem{J.S} J.S. Lowndes and H.M. Srivastava, Some triple series and triple integral equations. \emph{J. Math. Anal. Appl.}, \textbf{150}(1990) 181--187.

\bibitem{Nas}
C.~Nasim.
\newblock{On dual integral equations with Hankel kernel and an arbitrary weight function}.
\newblock{ \em Internat. J. Math. Math. Sci,} {\bf 9}(1986)293--300.


\bibitem{Nob2}
B.~Noble. \newblock{
 The solution of Bessel function dual integral equations by a multiplying-factor method.}
 \newblock{Proc. Cambridge Philos. Soc.}, {\bf 59}(1963) 351--362.


\bibitem{Nob1}
B.~Noble.
\newblock{Cerain  dual integral equations}.
\newblock{ \em  J. Math and Phys.},{\bf 37}(1958) 128--136.


\bibitem{Nob}
B.~Noble.
\newblock{On some  dual integral equations}.
\newblock{ \em  Quart. J. Math},{\bf 6}(2)(1955) 81--87.

\bibitem{Pet}
A.S.~Peters. \newblock{Certain Dual Integral Equations and
Sonine's Integrals}. \newblock{IMM-NYU 285 Institute of
Mathematical Sciences}, New York University, New York, 1961, 41
pp.


\bibitem{B.M} B.M. Singh, J. Rokne and R.S. Dhaliwal, The elementary solution of triple integral equations and the solution of triple series equations involving associated Legendre polynomials and their application. \emph{Integr. equ. oper. theory}, $\textbf{51}$(2005) 565-581.

     \bibitem{K.N} K.N. Srivastava, On triple integral equations involving Bessel functions as kernel. \emph{J. Maulana Azad College Tech.}, \textbf{21}(1988) 39--50.

 \bibitem{R.F}  R.F. Swarttouw, The Hahn-Exton $q$-Bessel Function. Ph.D. Thesis, \emph{The Technical University
of Delft}, (1992).

\bibitem{C.J} C.J. Tranter, Some triple integral equations. \emph{Proc. Glasgow Math. Assoc.}, $\textbf{4}$(1960) 200--203.



\bibitem{N.A} N.A. Virchenko and V.A. Romashchenko, Some triple integral equations with associated Legendre Functions. \emph{(Russian), Vychisl Prikl. Mat. (Kiev)}, \textbf{46}(1982)13--18.

 \end{thebibliography}

\end{document}